\definecolor{penndarkestblue}{cmyk}{1,0.74,0,0.77}
\definecolor{penndarkerblue}{cmyk}{1,0.74,0,0.70}
\definecolor{pennblue}{cmyk}{0.99,0.66,0,0.57} 
\definecolor{pennlighterblue}{cmyk}{0.98,0.44,0,0.35}
\definecolor{pennlightestblue}{cmyk}{0.38,0.17,0,0.17} 
\definecolor{penndarkestred}{cmyk}{0,1,0.89,0.66}
\definecolor{penndarkerred}{cmyk}{0,1,0.88,0.55}
\definecolor{pennred}{cmyk}{0,1,0.83,0.42} 
\definecolor{pennlighterred}{cmyk}{0,1,0.6,0.24}
\definecolor{pennlightestred}{cmyk}{0,0.43,0.26,0.12} 
\definecolor{penndarkestgreen}{cmyk}{1,0,1,0.68}
\definecolor{penndarkergreen}{cmyk}{1,0,1,0.57}
\definecolor{penngreen}{cmyk}{1,0,1,0.44} 
\definecolor{pennlightergreen}{cmyk}{1,0,1,0.25}
\definecolor{pennlightestgreen}{cmyk}{0.43,0,0.43,0.13}
\definecolor{penndarkestorange}{cmyk}{0,0.65,1,0.49}
\definecolor{penndarkerorange}{cmyk}{0,0.65,1,0.33}
\definecolor{pennorange}{cmyk}{0,0.54,1,0.24} 
\definecolor{pennlighterorange}{cmyk}{0,0.32,1,0.13}
\definecolor{pennlightestorange}{cmyk}{0,0.15,0.46,0.06}
\definecolor{penndarkestpurple}{cmyk}{0,1,0.11,0.86}
\definecolor{penndarkerpurple}{cmyk}{0,1,0.13,0.82}
\definecolor{pennpurple}{cmyk}{0,1,0.11,0.71} 
\definecolor{pennlighterpurple}{cmyk}{0,1,0.05,0.46}
\definecolor{pennlightestpurple}{cmyk}{0,0.35,0.02,0.23}
\definecolor{pennyellow}{cmyk}{0,0.20,1,0.05} 
\definecolor{pennlightgray1}{cmyk}{0,0,0,0.05}
\definecolor{pennlightgray3}{cmyk}{0.01,0.01,0,0.18}
\definecolor{pennmediumgray1}{cmyk}{0.04,0.03,0,0.31}
\definecolor{pennmediumgray4}{cmyk}{0.08,0.06,0,0.54}
\definecolor{penndarkgray2}{cmyk}{0.09,0.07,0,0.71}
\definecolor{penndarkgray4}{cmyk}{0.1,0.1,0,0.92}
\def\Tr{\mathsf{T}}
\def\Hr{\mathsf{H}}
\def\jj{\mathfrak{j}}
\def\dc{\mathrm{dc}}
\def\ER{\mathrm{ER}}
\newtheorem{lemma}{\hspace{0pt}\bf Lemma}
\newtheorem{proposition}{\hspace{0pt}\bf Proposition}
\newtheorem{theorem}{\hspace{0pt}\bf Theorem}
\newtheorem{corollary}{\hspace{0pt}\bf Corollary}
\newtheorem{remark}{\hspace{0pt}\bf Remark}
\newtheorem{definition}{\hspace{0pt}\bf Definition}
\def \mod {\text{\,mod\,}}
\begin{document}

\title{Ergodicity in Stationary Graph Processes: \\ A Weak Law of Large Numbers}

\author{Fer\hspace{0.02cm}nando~Gama~
        and~Alejandro~Ribeiro
\thanks{Supported by NSF CCF 1717120, ARO W911NF1710438, ARL DCIST CRA W911NF-17-2-0181, ISTC-WAS and Intel DevCloud. Authors are with Dept. of Electrical and Systems Eng., Univ. of Pennsylvania, \{fgama, aribeiro\}@seas.upenn.edu. Part of the results in this paper appeared in \cite{gama17}.}
}

\markboth{IEEE TRANSACTIONS ON SIGNAL PROCESSING (ACCEPTED)}%
{Ergodicity in Stationary Graph Processes}

\maketitle

\begin{abstract}
For stationary signals in time the weak law of large numbers (WLLN) states that ensemble and realization averages are within $\epsilon$ of each other with a probability of order $\ccalO(1/N\eps^2)$ when considering $N$ signal components. The graph WLLN introduced in this paper shows that the same is essentially true for signals supported on graphs. However, the notions of stationarity, ensemble mean, and realization mean are different. Recent papers have defined graph stationary signals as those that satisfy a form of invariance with respect to graph diffusion. The ensemble mean of a graph stationary signal is not a constant but a node-varying signal whose structure depends on the spectral properties of the graph. The realization average of a graph signal is defined here as an average of successive weighted averages of local signal values with signal values of neighboring nodes. The graph WLLN shows that these two node-varying signals are within $\eps$ of each other with probability of order $\ccalO(1/N\eps^2)$ in at least some nodes. In stationary time signals, the realization average is not only a consistent estimator of the ensemble mean but also optimal in terms of mean squared error (MSE). This is not true of graph signals. Optimal MSE graph filter designs are also presented. An example problem concerning the estimation of the mean of a Gaussian random field is presented.
\end{abstract}

\begin{IEEEkeywords}
Graph signal processing, ergodicity, law of large numbers, unbiased, consistent, optimal estimators
\end{IEEEkeywords}

\IEEEpeerreviewmaketitle


\section{Introduction} \label{sec_intro}

Random signals and stochastic processes provide the foundation of statistical signal processing which is concerned with the problem of extracting useful information out of random (noisy) data. One important concept in the field is the notion of stationarity that can be defined on signals with regular structure such as images or signals in time \cite{doob53, shanmugan88}. In particular, wide sense stationarity (WSS) of time signals models processes which have constant mean and a correlation function that only depends on the time elapsed between different signal components. A fundamental property that arises in WSS is ergodicity: The equivalence between the {\it ensemble} and the {\it realization} mean \cite{petersen83, gray09}. This property is valuable in situations where we have access to a single realization of the process because it allows estimation of the {\it ensemble} mean -- a property of the process -- using the {\it realization} mean -- a property of the individual given realization. Ergodicity results are a manifestation of the Law of Large Numbers (LLN) as they state convergence of the sample mean obtained from averaging all the time samples of a single realization to the true ensemble mean of the process. The Weak (W)LLN states that this convergence is obtained in probability \cite[Ch. 3]{sen93}. The pointwise ergodic theorem proves almost sure convergence for stationary processes \cite[Thm. 7.2.1]{durrett10}.

\begin{figure*}[t!]
	\centering
	\begin{subfigure}{0.31\textwidth}
		\centering
		\includegraphics[width=\textwidth]{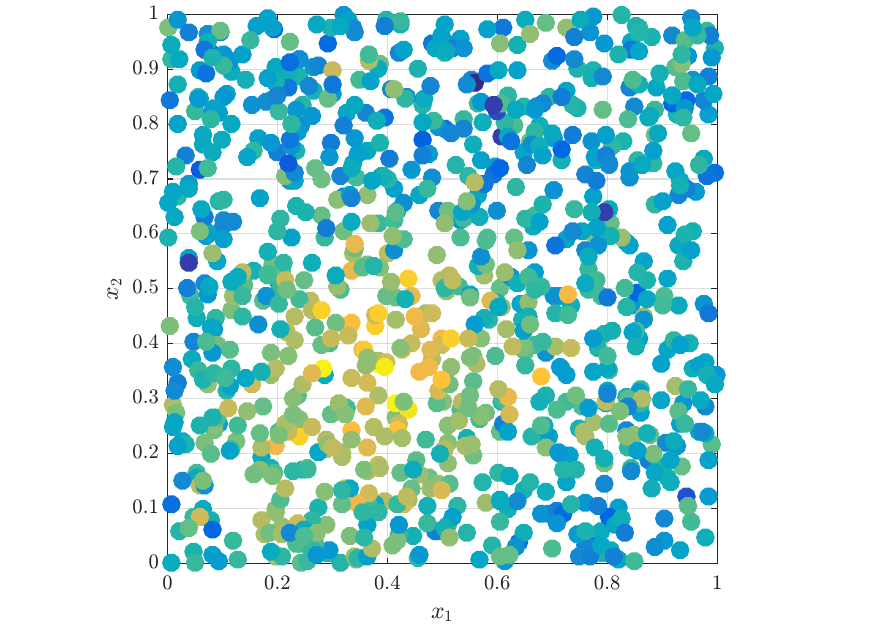}
		\caption{Single realization}
		\label{fig_sensor_measurements}
	\end{subfigure}
	\hfill
	\begin{subfigure}{0.31\textwidth}
		\centering
		\includegraphics[width=\textwidth]{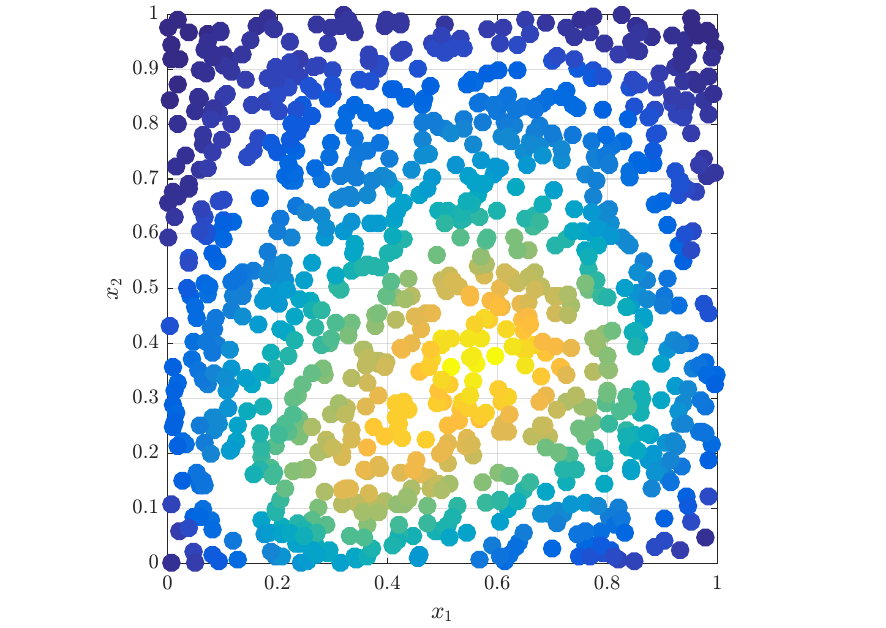}
		\caption{Graph \emph{realization} mean}
		\label{fig_sample_mean}
	\end{subfigure}
	\hfill
	\begin{subfigure}{0.31\textwidth}
		\centering
		\includegraphics[width=\textwidth]{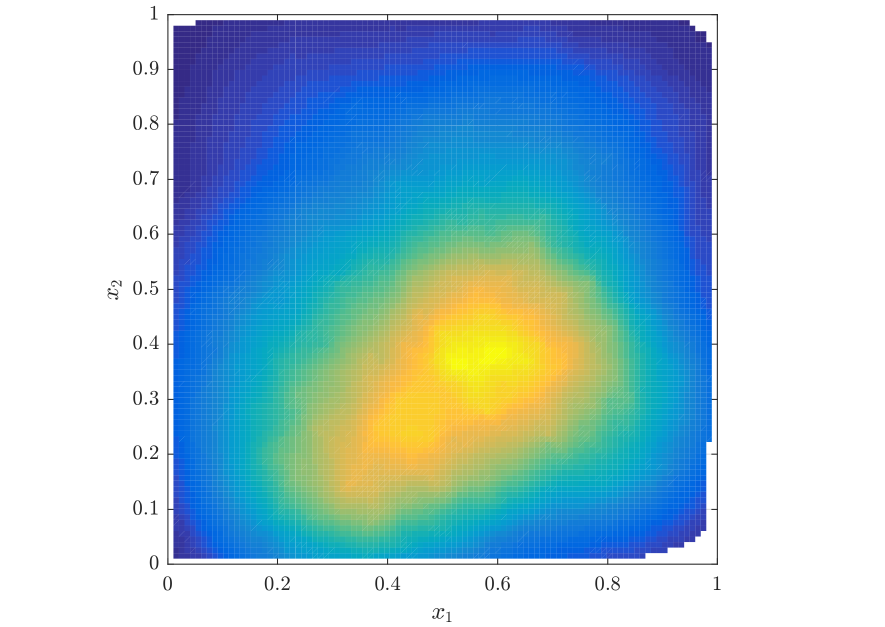}
		\caption{\emph{Ensemble} mean}
		\label{fig_true_mean}
	\end{subfigure}
	\caption{Illustration of a Gaussian Markov Random Field (GMRF) being measured by a sensor network. \subref{fig_sensor_measurements} Measurements of the GMRF (color-coded) taken by the deployed sensors. \subref{fig_sample_mean} Values at each sensor obtained after computing the graph \emph{realization} mean (defined as the \emph{graph shift average} described in Section~\ref{sec_ergodicity}). \subref{fig_true_mean} True \emph{ensemble} mean of the GMRF. We observe that the graph \emph{realization} mean is close to the true \emph{ensemble} mean. We also note that the ensemble mean is not a constant vector.}
	\label{fig_gmrf_1000}
\end{figure*}

The goal of this paper is to generalize the WLLN to signals supported in irregular domains described by arbitrary graphs \cite{dimakis10, hammond11}. We do so by building on the theory of graph signal processing (GSP) \cite{sandryhaila13, sandryhaila14, shuman13, sandryhaila14-mag} and, more specifically, on the concept of WSS graph signals \cite{girault15, perraudin17, marques17}. GSP is built on the notion of a graph shift operator. This is a matrix representation of the graph that defines a linear operator that can be applied to graph signals. The graph shift diffuses the signal through the graph and it represents a local operation that permits comparison of the value of the signal with the values of neighboring nodes. This is a natural generalization of a time shift, which compares signal values at adjacent points in time -- and, indeed, the time shift is a particular case of a graph shift. This motivates definition of the graph Fourier transform (GFT) as a projection on the eigenvector space of the graph shift. Since eigenvectors can be shown to represent different modes of variability, the GFT provides a natural description of (graph) frequency components. The relationship between time and graph shifts is also exploited in the definition of graph WSS signals which are required to satisfy a form of invariance with respect to the graph shift operator \cite{marques17}. As is the case of time signals, this invariance induces conditions in the first and second moments of the signal. The ensemble mean of a WSS graph signal must be aligned with an eigenvector of the graph shift operator and the covariance matrix is diagonalized by its eigenvector basis. As a consequence of the latter, WSS graph signals have a covariance structure that is completely characterized by a power spectral density (PSD).

The question we ask in this paper is whether an analogous notion of ergodicity can be defined for WSS graph signals. We will provide an affirmative answer but in understanding our results it is important to explain the differences between the {\it ensemble} and {\it realization} mean for graph signals and their respective counterparts for time signals. For example, when we have a Markov random field observed by a sensor network it is ready to construct a graph linking the sensors so that the sensor's observations are WSS with respect to the graph. Fig.~\ref{fig_gmrf_1000}-(c) provides an illustration of the ensemble mean of the Markov random field monitored by the sensor network, Fig.~\ref{fig_gmrf_1000}-(a) measurements of a single realization, and Fig.~\ref{fig_gmrf_1000}-(b) the graph realization mean of this set of observations. It is apparent that the ensemble mean of a graph signal is not a constant and that, consequently, the graph realization mean is not a simple average of the values observed by individual nodes. Still, the similarity between Fig.~\ref{fig_gmrf_1000}-(b) and Fig.~\ref{fig_gmrf_1000}-(c) illustrates that it is possible to define an {\it ensemble} mean for graph stationary signals (Def. \ref{def_stationarity}) and to define a graph {\it realization} mean (Def. \ref{def_graph_shift_average}) that is close to the true ensemble mean.

We begin the paper by introducing necessary GSP concepts to define the graph realization mean and graph WSS signals (Sec. \ref{sec_prelim}). To emphasize the difference between the graph realization mean and the sample mean, we refer to the former as {\it graph shift averages.} A graph shift average considers subsequent applications of the graph shift and averages the resulting values (Def.~\ref{def_graph_shift_average}). From the perspective of a given node, this is tantamount to computing a weighted average of its own information with the information of neighbors, followed by a weighted average with neighboring weighted averages, and so on. Eventually, each node utilizes network-wide information to refine an estimate of the mean at its own location while relying on interactions with neighboring nodes only -- this graph shift average is used to produce the signal in Fig.~\ref{fig_gmrf_1000}-(b) from the observations in Fig.~\ref{fig_gmrf_1000}-(a).

To elucidate ergodicity properties (Sec. \ref{sec_ergodicity}) we study the moments of graph shift averages (Sec. \ref{subsec_diffusion}). We show that {\it graph shift averages} are unbiased estimators of the {\it graph ensemble mean} (Prop. \ref{prop_unbiased}) and that they are themselves stationary with respect to the graph shift operator (Prop. \ref{prop_psd_graph_shift_average}). The latter fact allows to show that the graph shift average behaves like a low pass filter that keeps the DC graph frequency component unchanged but attenuates all other coefficients. From this spectral property we derive the graph weak law of large numbers which states that the difference between the ensemble mean and the graph shift average exceeds $\epsilon$ with a probability of order $\ccalO(1/N\eps^2)$ for a graph with $N$ nodes (Thm. \ref{thm_glln}). This is the same property that can be established from signals in time. We observe that in the latter case it is possible to let $N$ grow to conclude convergence in probability; a result that we can recover as a particular case (Cor. \ref{coro_wlln}). In the general case of graph signals, we cannot let $N$ grow unless we assume the underlying graph support belongs to a certain class of graphs for which the limit of $N$ has proper meaning \cite{wolfe13, segarra17-graphons}. Therefore, on the one hand, we provide results that depend on the size $N$ of the graph and establish conditions on how the spectral properties of the graph should behave with increasing $N$, and on the other hand, we study cases on which this behavior is satisfied, such as random Erd\H{o}s-R{\'e}nyi graphs, in order to exemplify how convergence in probability results can be obtained from our probability bounds for finite $N$ (Cor. \ref{coro_er}).

In time signals the realization mean is also an optimal estimator in the sense of minimizing the mean squared error (MSE). In the case of graph signals this is not true. The MSE of graph shift averages can be further improved by adequately rescaling each subsequent application of the graph shift; i.e., through the use of a graph filter (Sec.~\ref{sec_optimal}). Not surprisingly, the filter that minimizes the MSE is an ideal low pass filter (Thm.~\ref{thm_optimal_MSE}) which we further show is also optimal in terms of minimizing the volume of the error ellipsoid (Thm.~\ref{thm_optimal_logdet}). Numerical experiments on the application of graph shift averages are conducted on Erd\H{o}s-R\'{e}nyi graphs, covariance graphs and stochastic block models to corroborate convergence to the true ensemble mean as the number of nodes increases (Sec.~\ref{sec_sims}). An example application consisting of the aforementioned problem of estimating the mean of a Markov random field using a single realization of the field and relying on local information exchanges is presented and compared to existing approaches \cite{dilorenzo14}. We close the paper with concluding remarks (Sec.~\ref{sec_conclusions}).


%
\section{Graph Shift Averages} \label{sec_prelim}

Let $\ccalG=(\ccalV,\ccalE,\ccalW)$ be a connected graph with $N$ vertices $n\in\ccalV$, edges $e\in\ccalE \subseteq \ccalV \times \ccalV$, and weight function $\ccalW:\ccalE \to \reals$. Further define the graph shift operator as a matrix $\bbS \in \reals^{N \times N}$ having the same sparsity pattern of the graph $\ccalG$ so that we can have $[\bbS]_{ij} = s_{ij}\neq0$ only when $(j,i)\in\ccalE$. The graph shift operator is a stand in for any of the various matrix representations associated with the graph $\ccalG$ \cite{sandryhaila13}.  Particular cases include the adjacency matrix $\bbA$ with nonzero elements $[\bbA]_{ij}=a_{ij}=\ccalW(j,i)$ for all $(j,i)\in \ccalE$, the diagonal degree matrix $\bbD=\diag(\bbA\bbone)$, and the normalized adjacency $\bbA_{\text{norm}} = \bbD^{-1/2}\bbA\bbD^{-1/2}$. 

We assume that the graph shift operator is normal, which implies existence of an orthonormal basis of eigenvectors $\bbV = [\bbv_{1},\ldots,\bbv_{N}] \in \mbC^{N \times N}$ satisfying $\bbV\bbV^{\Hr}=\bbI$ and a diagonal eigenvalue matrix $\bbLambda = \diag(\lambda_{1},\ldots,\lambda_{N}) \in \mbC^{N \times N}$ such that we can write the shift operator as $\bbS = \bbV \bbLambda \bbV^{\Hr}$. Adjacency matrices of undirected graphs are symmetric and therefore always normal. We further restrict attention to nonnegative operators satisfying $s_{ij}\geq0$ for all $i,j$. For these operators the Perron-Frobenius Theorem applies and we can guarantee that a unique real nonnegative eigenvalue attains maximal norm \cite{gantmacher89}. Without loss of generality we assume that $\lambda_1$ is such eigenvalue so that we have
\begin{equation} \label{eqn_perron_frobenius}
   |\lambda_n| \leq \lambda_1 \in\reals^+, \quad \text{for all } n\neq1 .
\end{equation}
Although we may have eigenvalues with $|\lambda_n| = \lambda_1$ we will relax nomenclature to say that $\lambda_1$ is the largest eigenvalue of $\bbS$. The eigenvector $\bbv_1=[v_{11},\ldots, v_{N1}]^{\Tr} \in \mbC^{N}$ associated with eigenvalue $\lambda_1$ can be chosen to be real and nonnegative and it plays an important role in the definition of graph stationary signals (Section \ref{sec_graph_stationarity}) and the ergodicity theorems in Sections \ref{sec_ergodicity} and \ref{sec_optimal}. The adjacency and normalized adjacency are nonnegative shift operators but the Laplacian $\bbL = \bbD-\bbA$ does not belong to this category -- see Remark \ref{rmk_laplacian}. 
 
Associated with the nodes of the graph we have a random signal $\bbx=[x_{1},\ldots, x_{N}]^{\Tr}\in\reals^{N}$. We use $\bbmu:=\mbE[\bbx]$ to denote the mean of the signal $\bbx$ which represents the {\it ensemble mean.} We use $\bbC_{x}:=\mbE[(\bbx-\bbmu)(\bbx-\bbmu)^{\Hr}]$ to represent the covariance matrix of the random graph signal $\bbx$. The signal $\bbx$ is defined on an irregular domain and the graph shift $\bbS$ is a description of the underlying signal structure. The structure of the graph induces a domain-specific transform that we call the graph Fourier transform (GFT) of $\bbx$ associated with the shift $\bbS$ \cite{sandryhaila14}. The GFT of $\bbx$ is the signal $\tbx$ obtained after projecting on the eigenvector basis of $\bbS$ and the inverse GFT (iGFT) transform is defined as the inverse projection operation,
\begin{equation} \label{eqn_gft_def}
   \tbx = \bbV^{\Hr}\bbx  \quad \Longleftrightarrow \quad
   \bbx = \bbV \tbx.
\end{equation}
The elements $[\tbx]_{n}=\tdx_{n}$ of $\tbx$ are known as the frequency coefficients of the signal and the eigenvectors $\bbv_{n}$ are the frequency components or frequency basis. The GFT is also a random signal with mean $\tbmu:=\mbE[\tbx]=\bbV^{\Hr}\bbmu$ and covariance matrix $\bbC_{\tdx}:=\mbE[(\tbx-\E{\tbx})(\tbx-\E{\tbx})^{\Hr}]$. To relate $\bbC_{\tdx}$ to the covariance matrix $\bbC_{x}$ observe that it follows from the definition of the GFT in \eqref{eqn_gft_def} that $\bbC_{\tdx}=\bbV^{\Hr}\mbE[(\bbx-\E{\bbx})(\bbx-\E{\bbx})^{\Hr}]\bbV$. The expectation in this latter expression is the covariance $\bbC_{x}$ of the signal in the node domain implying that $\bbC_{\tdx} = \bbV^{\Hr}\bbC_x\bbV$.

Of particular importance in our forthcoming discussions is the cyclic graph $\ccalG_{\dc}$ with edge set composed of all the edges of the form $(n, 1 + n \mod N)$ for $n\in[1,N]$; see Figure \ref{fig_dc}. Let $\bbA_{\dc}$ denote the corresponding adjacency matrix with nonzero elements $[\bbA_{\dc}]_{1 + n \mod N, n}=1$. The cyclic graph is a natural description of discrete (periodic) time and we can therefore say that a discrete time signal $\bbx$ is a graph signal defined on the cyclic graph $\ccalG_{\dc}$. Since the cycle is a circulant graph it is diagonalized by the Fourier basis and it follows that the GFT reduces to the conventional DFT for this particular graph.

%
\begin{figure}
	\centering

\def \thisplotscale {0.55}
\def \unit {\thisplotscale cm}

\def \xdisplaced{5}
\def \dotsize{7}

\tikzstyle{dot} = 
	[ellipse,
	 inner sep = 0pt,
	 anchor = center,
	 draw=black,
	 minimum height = \dotsize,
	 minimum width  = \dotsize]

\tikzstyle{cover} = 
	[ellipse,
	 inner sep = 0pt,
	 anchor = center]

\def\ypos{1.5}
\def\xpos{1.5}
\def\lbl{0.75}
\def\mylinewidth{0.75}

{\footnotesize
\begin{tikzpicture}[scale = \thisplotscale]

	\node at (0,0) (DC1) {};
	\node at (\xdisplaced,0) (DC2) {};
	\node at (2*\xdisplaced,0) (DC3) {};
	
	
	\path (0,\ypos) node[dot,fill=pennblue] (x11) {} ++ (0,\lbl) node {${
	 x_{1}}$};
	\path (0.866*\xpos,0.5*\ypos) node[dot,fill=penngreen] (x12) {} ++ (0.866*\lbl,0*\lbl) node {${
	 x_{2}}$};
	\path (0.866*\xpos,-0.5*\ypos) node[dot,fill=pennyellow] (x13) {} ++ (0.866*\lbl,-0*\lbl) node {${
	 x_{3}}$};
	\path (0,-1*\ypos) node[dot,fill=pennorange] (x14) {} ++ (0,-1*\lbl) node {${
	 x_{4}}$};
	\path (-0.866*\xpos,-0.5*\ypos) node[dot,fill=pennred] (x15) {} ++ (-0.866*\lbl,-0*\lbl) node {${
	 x_{5}}$};
	\path (-0.866*\xpos,0.5*\ypos) node[dot,fill=pennpurple] (x16) {} ++ (-0.866*\lbl,0*\lbl) node {${
	 x_{6}}$};
	
	\path (x11) edge[bend left=20,-stealth, line width=\mylinewidth] (x12);
	\path (x12) edge[bend left=20,-stealth, line width=\mylinewidth] (x13);
	\path (x13) edge[bend left=20,-stealth, line width=\mylinewidth] (x14);
	\path (x14) edge[bend left=20,-stealth, line width=\mylinewidth] (x15);
	\path (x15) edge[bend left=20,-stealth, line width=\mylinewidth] (x16);
	\path (x16) edge[bend left=20,-stealth, line width=\mylinewidth] (x11);
	
	
	\path (\xdisplaced,0) ++ (0,\ypos) node[dot,fill=pennpurple] (x21) {} ++ (0,\lbl) node {${x_{1}}+{x_{6}}$};
	\path (\xdisplaced,0) ++(0.866*\xpos,0.5*\ypos) node[dot,fill=pennblue] (x22) {} ++ (0.866*\lbl,1*\lbl) node {${x_{2}}+{ x_{1}}$};
	\path (\xdisplaced,0) ++(0.866*\xpos,-0.5*\ypos) node[dot,fill=penngreen] (x23) {} ++ (0.866*\lbl,-1*\lbl) node {${ x_{3}}+{ x_{2}}$};
	\path (\xdisplaced,0) ++(0,-1*\ypos) node[dot,fill=pennyellow] (x24) {} ++ (0,-1*\lbl) node {${ x_{4}}+{ x_{3}}$};
	\path (\xdisplaced,0) ++(-0.866*\xpos,-0.5*\ypos) node[dot,fill=pennorange] (x25) {} ++ (-0.866*\lbl,-1*\lbl) node {${ x_{5}}+{ x_{4}}$};
	\path (\xdisplaced,0) ++(-0.866*\xpos,0.5*\ypos) node[dot,fill=pennred] (x26) {} ++ (-0.866*\lbl,1*\lbl) node {${ x_{6}}+{ x_{5}}$};
	
	\path (x21) edge[bend left=20,-stealth, line width=\mylinewidth] (x22);
	\path (x22) edge[bend left=20,-stealth, line width=\mylinewidth] (x23);
	\path (x23) edge[bend left=20,-stealth, line width=\mylinewidth] (x24);
	\path (x24) edge[bend left=20,-stealth, line width=\mylinewidth] (x25);
	\path (x25) edge[bend left=20,-stealth, line width=\mylinewidth] (x26);
	\path (x26) edge[bend left=20,-stealth, line width=\mylinewidth] (x21);
	
	
	\draw (DC1.center) ++ (0.866*\xpos,2*\ypos) edge[bend left=20,-stealth, line width=2*\mylinewidth] node[midway,above] {$\bbx+\bbA_{\dc}\bbx$}(-0.866*\xpos+1*\xdisplaced,2*\ypos);
	
	
	\path (2.2*\xdisplaced,0) ++ (0,\ypos) node[dot,fill=penngreen] (x31) {} ++ (0,\lbl) node {$\sum x_{k}$};
	\path (2.2*\xdisplaced,0) ++(0.866*\xpos,0.5*\ypos) node[dot,fill=pennyellow] (x32) {} ++ (0.866*\lbl,0.8*\lbl) node {$\sum x_{k}$};
	\path (2.2*\xdisplaced,0) ++(0.866*\xpos,-0.5*\ypos) node[dot,fill=pennorange] (x33) {} ++ (0.866*\lbl,-0.8*\lbl) node {$\sum x_{k}$};
	\path (2.2*\xdisplaced,0) ++(0,-1*\ypos) node[dot,fill=pennred] (x34) {} ++ (0,-1*\lbl) node {$\sum x_{k}$};
	\path (2.2*\xdisplaced,0) ++(-0.866*\xpos,-0.5*\ypos) node[dot,fill=pennpurple] (x35) {} ++ (-0.866*\lbl,-0.8*\lbl) node {$\sum x_{k}$};
	\path (2.2*\xdisplaced,0) ++(-0.866*\xpos,0.5*\ypos) node[dot,fill=pennblue] (x36) {} ++ (-0.866*\lbl,0.8*\lbl) node {$\sum x_{k}$};
	
	\path (x31) edge[bend left=20,-stealth, line width=\mylinewidth] (x32);
	\path (x32) edge[bend left=20,-stealth, line width=\mylinewidth] (x33);
	\path (x33) edge[bend left=20,-stealth, line width=\mylinewidth] (x34);
	\path (x34) edge[bend left=20,-stealth, line width=\mylinewidth] (x35);
	\path (x35) edge[bend left=20,-stealth, line width=\mylinewidth] (x36);
	\path (x36) edge[bend left=20,-stealth, line width=\mylinewidth] (x31);
	
	
	\draw (DC2.center) ++ (0.866*\xpos,2*\ypos) edge[dashed, bend left=20,-stealth, line width=2*\mylinewidth] node[midway,above] {$\sum_{\ell=0}^{5} \bbA_{\dc}^{\ell} \bbx$}(-0.866*\xpos+2.2*\xdisplaced,2*\ypos);

\end{tikzpicture}
}
	\caption{Graph shift average in discrete-time processes. A discrete-time process can be described as supported by a directed cycle graph. Then, application of the graph shift $\bbS_{\dc}=\bbA_{\dc}$ \emph{moves} the value of the signal at one node to the next, generating the causality typical of time. This means that, by aggregating all shifted version of the signal (i.e. computing the graph shift average) at a single node, one can construct the arithmetic mean at every node.}
	\label{fig_dc}
\end{figure}
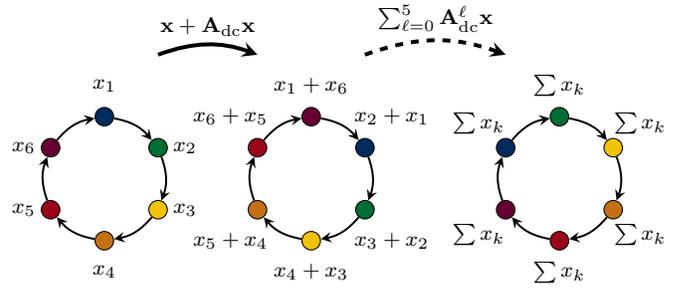

Our interest in this paper is to define and study graph shift averages of random graph signals. To motivate the definition recall that in the case of time signals the time average of the signal $\bbx$ is the arithmetic mean
\begin{equation} \label{eqn_time_average}
	\hhatmu_{N} =  \frac{1}{N} \sum_{n=1}^{N} x_{n} .
\end{equation}
If we think of $\bbx$ as a windowed version of a \emph{stationary} stochastic process, we know that as $N$ grows, the \emph{realization} average $\hhatmu_{N}$ converges to the \emph{ensemble} mean of the process $\mbE[x_{n}] = \mu$ \cite{gray09}. 

For general graph signals, notice that the shift operator $\bbS$ represents a local transformation that when applied to the graph signal $\bbx$ produces the signal $\bby=\bbS\bbx$ in which the component $[\bby]_{i} = y_{i} = \sum_{(j,i)\in\ccalE} s_{ij} x_j$ depends on the values of $\bbx$ at nodes $j$ that are adjacent to $i$. The locality of the shift operator motivates the following definition of {\it graph shift averages}.

%
\begin{definition}\label{def_graph_shift_average}
Given a graph with shift operator $\bbS$ and a corresponding graph signal $\bbx$, the graph shift average of $\bbx$ is the graph signal
\begin{equation} \label{eqn_graph_shift}
	\hbmu_{N} := \frac{1}{\alpha(\bbS)} \sum_{\ell=0}^{N-1} \bbS^{\ell}\bbx,
\end{equation}
for some constant $\alpha(\bbS)>0$ that depends on the topology of the underlying graph support; see Proposition \ref{prop_unbiased}. \end{definition}

%
The graph shift average is a simple diffusion of the signal $\bbx$ through the graph. A practically important property of diffusions is that they can be implemented in a distributed manner. The $n$th element of $\hbmu_{N}$ can be obtained at node $n$ through $N-1$ communication exchanges with neighboring nodes.

When $\bbS = \bbA_{\dc}$, application of the shift operator results in the signal $\bby=\bbA_{\dc}\bbx$ with components ${y_{1+n\mod N}}=x_n$ implying that $\bby$ is a time shifted copy of $\bbx$. Making $\alpha(\bbS) = N$ and $\bbS=\bbA_{\dc}$ in \eqref{eqn_graph_shift}, it follows that $\hbmu_{N}=\hhatmu_{N}\bbone$. This justifies the use of \eqref{eqn_graph_shift} as a generalization of \eqref{eqn_time_average}. As we will see in Section \ref{sec_ergodicity}, the use of the {\it graph shift average} as a generalization of {\it realization averages} is further justified by its relationship to the {\it ensemble mean} $\bbmu$. This relationship will hold for graph stationary signals \cite{girault15, perraudin17, marques17} which we introduce in the following section.

%
\subsection{Wide Sense Stationarity of Graph Signals}\label{sec_graph_stationarity}

Wide sense stationarity (WSS) of graph signals is defined with respect to an associated graph shift operator $\bbS$ and imposes certain statistical structure on the random graph signal $\bbx$. We adopt the following definition \cite{marques17}.

%
\begin{definition}[WSS Graph Signal] \label{def_stationarity}
Let $\bbx \in \reals^{N}$ be a random graph signal supported on the $N$-node graph $\ccalG$ having associated graph shift $\bbS\in\reals^{N\times N}$ which we assume nonnegative and normal with eigenvalue  decomposition $\bbS = \bbV \bbLambda \bbV^{\Hr}$. The signal $\bbx$ is a WSS graph signal with respect to the graph shift $\bbS$ if 
\begin{enumerate}[(i)]
\item The expectation of $\bbx$ is proportional to the eigenvector of the graph shift $\bbS$ associated to its largest eigenvalue. I.e., there exist a constant $\mu \in \reals$ such that
\begin{equation}\label{eqn_mean_definition}
   \bbmu \ :=\ \mbE[\bbx]
         \  =\ \mu \bbv_{1} .
\end{equation}
\item The covariance matrix of $\bbx$ is diagonalized by the eigenvector basis of the shift operator. I.e., there exists a vector $\bbp \in \reals^{N}$ such that
\begin{equation}\label{eqn_joint_diagonalization}
   \bbC_{x}\ :=\ \mbE \Big[ (\bbx-\bbmu) (\bbx-\bbmu)^{\Hr} \Big] 
           \  =\ \bbV \diag(\bbp) \bbV^{\Hr} .
\end{equation}
We say that $\bbp$ is the power spectral density (PSD) of $\bbx$.
\end{enumerate}
\end{definition}

%
\noindent In aligning the mean with an eigenvector of $\bbS$, Property (i) in Def.~\ref{def_stationarity} implies that the direction of the mean remains unchanged after successive applications of the shift operator. This is an intuitively reasonable invariance that is consistent with stationarity of time signals because the Perron-Frobenius root of $\bbA_{\dc}$ is $\bbv_1=\bbone$. Further observe that because $\bbmu = \mu \bbv_{1}$ the expected value of the frequency representation of $\bbx$ is $\mbE[\tbx]=\bbV^{\Hr} \bbmu = \mu \bbe_{1}$ where $\bbe_{1}=[1,0,\ldots,0]^{\Tr}$ is the first element of the canonical basis of $\reals^N$. 

Property (ii) implies that the graph frequency representation of $\bbx$ is comprised of uncorrelated components. Indeed, since we know that the spectral representation covariance is $\bbC_{\tdx} = \bbV^{\Hr}\bbC_x\bbV$ and the covariance accepts the form in \eqref{eqn_joint_diagonalization} we conclude that 
\begin{equation} \label{eqn_psd}
   \bbC_{\tdx} \ :=\ \mbE\Big[(\tbx-\E{\tbx})(\tbx-\E{\tbx})^{\Hr}\Big]		
		       \  =\ \diag(\bbp).
\end{equation}
The covariance in \eqref{eqn_psd} not only implies that frequency coefficients are uncorrelated but also states that the components of the power spectral density $\bbp=[p_{1},\ldots,p_{N}]^{\Tr}$ represent the energy of the frequency coefficients of $\bbx$. Namely, that $\mbE[(\tdx_n-\mbE[\tdx_{n}])^2] = p_n$. This justifies referring to $\bbp$ as the PSD of $\bbx$. Property (ii) can be shown to be equivalent to imposing, on the covariance matrix $\bbC_{x}$, a form of invariance with respect to applications of the graph shift operator \cite{marques17}.

We remark that Def.~\ref{def_stationarity} boils down to the traditional WSS conditions on time signals, when modeling them as graph signals defined over the directed cycle graph (see Fig.~\ref{fig_dc}). It also reflects the fact that the ensemble mean of a WSS graph signal, like the GMRF described in Fig.~\ref{fig_gmrf_1000}, need not be a constant vector. More examples of WSS graph signals can be found in \cite{girault15, perraudin17, marques17}.

The goal of this paper is to study the relationship between the {\it graph shift average} $\hbmu_{N}$ in \eqref{eqn_graph_shift} and the {\it ensemble mean} $\bbmu$, which in the case of graph WSS signals satisfies \eqref{eqn_mean_definition}. We will see that for graph WSS, these two averages are close in terms of expressions that are similar to what follows from laws of large numbers of ergodic processes.

\begin{remark}\label{rmk_total_variation}\normalfont
The selection of the eigenvector $\bbv_1$ to codify the direction of the mean of a stationary graph signal is justified by the unique properties of the Perron-Frobenius root. We know that $\lambda_1$ is real and that $\bbv_1$ can be chosen to be real with nonnegative components \cite{cvetkovic79}. This means $\bbv_1$ has no zero crossings and in that sense it makes for a reasonable choice of DC component \cite{shuman13}. Another important property of $\bbv_1$ is that it corresponds to the eigenvalue with minimal variation \cite{sandryhaila14}. Specifically, define the total variation of a graph signal $\bbx$ as
\begin{equation}\label{eqn_tvdef}
	TV(\bbx) : = \left\| \bbx -  (1/\lambda_1)\bbS \bbx \right\|_{1}.
\end{equation}
The expression in \eqref{eqn_tvdef} compares components of $\bbx$ with the neighborhood average $\bbA \bbx$ normalized by $\lambda_1$ and it is therefore a proper measure of signal variability. The normalization by $\lambda_1$ is so that shifting the signal around does not cause an unbounded grow of the signal energy \cite{sandryhaila14} -- recall that $\lambda_1$ is the largest eigenvalue of $\bbS$. The eigenvector $\bbv_1$ is the one with minimal variation, i.e., $TV(\bbv_1)< TV(\bbv_n)$ for all $n\neq1$.
\end{remark}

\begin{remark} \normalfont
Def.~\ref{def_stationarity} establishes the conditions that a random graph signal $\bbx$ has to satisfy with respect to the graph shift operator $\bbS$ that describes a given graph $\ccalG$ to be considered WSS. However, there are many different shift operators $\bbS$ that can be chosen for any given graph $\ccalG$. While the choice of $\bbS$ is task dependent, a matrix $\bbS$ for which a set of signals results in WSS graph signals on $\bbS$ can be obtained by the methods described in \cite{segarra17-topoid}.
\end{remark}


%
\section{Graph Ergodicity} \label{sec_ergodicity}

If we restrict attention to the cyclic graph adjacency $\bbA_{\dc}$, classical results for stationary signals state that the graph shift average in \eqref{eqn_graph_shift} is unbiased, $\mbE[\hbmu_{N}] = \bbmu = \mu \bbone$. We further know that the the weak Law of Large Numbers (WLLN) holds \cite{billingsley95} and that we therefore have
\begin{equation} \label{eqn_time_wlln}
	\mbP \Big( | \hbmu_{N} - \bbmu|_n \geq \eps \Big)
		\leq \ccalO \left(\frac{1}{N\eps^{2}} \right). 
\end{equation}
The above implies that the probability of the ensemble mean and the graph shift average being arbitrarily far from each other decreases proportional to $1/N$ as the size $N$ of the underlying graph support increases and it further shows that to reduce the difference between $\hbmu_{N}$ and $\bbmu$ we need a quadratic increase in the size of the network -- i.e., to halve $\eps$ we need to increase $N$ by a factor of four if we want to maintain the same probability bound. We point out that writing $f(N,\eps) \leq \ccalO (1/\eps^{2}N)$ is not different from writing $f(N,\eps) = \ccalO (1/\eps^{2}N)$ as both would imply convergence at a rate that is at least proportional to $1/\eps^{2}N$ as $N$ grows. Convergence may be faster in either case. We choose to utilize the inequality to emphasize that a faster convergence rate may hold as we will observe that this indeed happens in some of our numerical analyses in Section \ref{sec_sims}. We will adopt the same convention for inequalities of the form $f(N) \leq o (1/N)$ which means that $f(N)$ goes to zero faster than $1/N$. This is the same as stating that $f(N) = o (1/N)$ but emphasizes that the bound is not tight in that even faster convergence may be observed.

Our goal is to generalize \eqref{eqn_time_wlln} to graph stationary signals. We begin by showing that $\hbmu_{N}$ can be made unbiased by properly defining $\alpha(\bbS)$ (Section \ref{subsec_diffusion}) and follow by considering generalizations of \eqref{eqn_time_wlln} (Section \ref{subsec_wlln}).

%
\subsection{Moments of Graph Shift Averages} \label{subsec_diffusion} 

As we show in the following proposition, the expectation of the graph shift average $\hbmu_{N}$ is aligned with $\bbv_1$ which is itself aligned with the mean $\bbmu$. To make the estimator unbiased we need to select $\alpha(\bbS)$ properly as we also show in the following proposition.

%
\begin{proposition} \label{prop_unbiased}
Let $\bbx$ be a WSS graph signal on the $N$-node graph $\ccalG$ described by a normal graph shift operator $\bbS=\bbV \bbLambda \bbV^{\Hr}$. Let $\bbmu$ be the ensemble mean [cf. \eqref{eqn_mean_definition}] and $\hbmu_{N}$ be the graph shift average [cf. \eqref{eqn_graph_shift}]. Then,
\begin{equation}\label{eqn_unbiased}
   \mbE [ \hbmu_{N} ] = \frac{\sum_{\ell=0}^{N-1}\lambda_{1}^{\ell}}
                             {\alpha(\bbS)} 
                        \bbmu.
\end{equation}
In particular, making $\alpha(\bbS) = \sum_{\ell=0}^{N-1} \lambda_{1}^{\ell}$ the graph shift average is an unbiased estimator as \eqref{eqn_unbiased} reduces to $\mbE [ \hbmu_{N} ] = \bbmu$.
\end{proposition}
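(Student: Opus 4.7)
The plan is to exploit linearity of expectation together with the eigenvector structure imposed on the mean by wide-sense stationarity. I would first push the expectation through the finite sum in the definition of $\hbmu_{N}$, using that $\bbS^{\ell}$ is deterministic, to obtain
\begin{equation*}
   \mbE[\hbmu_{N}] \ = \ \frac{1}{\alpha(\bbS)} \sum_{\ell=0}^{N-1} \bbS^{\ell}\, \mbE[\bbx] \ = \ \frac{1}{\alpha(\bbS)} \sum_{\ell=0}^{N-1} \bbS^{\ell}\bbmu .
\end{equation*}
This reduces the problem to evaluating $\bbS^{\ell}\bbmu$.

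Next, I would invoke Property (i) of Definition~\ref{def_stationarity}, which asserts $\bbmu = \mu\bbv_{1}$, where $\bbv_{1}$ is the Perron--Frobenius eigenvector of $\bbS$ associated with $\lambda_{1}$. Because $\bbv_{1}$ is an eigenvector, repeated application of the shift gives $\bbS^{\ell}\bbv_{1} = \lambda_{1}^{\ell}\bbv_{1}$, and scaling by $\mu$ yields $\bbS^{\ell}\bbmu = \lambda_{1}^{\ell}\bbmu$. Substituting back into the preceding display factors $\bbmu$ out of the sum and produces the claimed identity
\begin{equation*}
   \mbE[\hbmu_{N}] \ = \ \frac{\sum_{\ell=0}^{N-1}\lambda_{1}^{\ell}}{\alpha(\bbS)}\, \bbmu .
\end{equation*}
The unbiasedness statement is then immediate: choosing $\alpha(\bbS) = \sum_{\ell=0}^{N-1}\lambda_{1}^{\ell}$ cancels the prefactor. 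Note this quantity is strictly positive since $\lambda_{1} \geq 0$ by the Perron--Frobenius hypothesis in \eqref{eqn_perron_frobenius} and the $\ell = 0$ term contributes $1$, so the normalization is well defined.

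There is no real obstacle here: the argument is essentially two lines once linearity of expectation is combined with the eigenvector property of $\bbmu$. The only subtlety worth flagging is that the proof does \emph{not} use Property (ii) of Definition~\ref{def_stationarity} at all; the first-moment alignment in \eqref{eqn_mean_definition} is the entire content needed. This is consistent with the fact that unbiasedness concerns only the mean of $\hbmu_{N}$, with the PSD structure reserved for the second-moment (variance) analysis that will be required for the subsequent WLLN-type results.
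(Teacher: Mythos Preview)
Your proof is correct and follows essentially the same approach as the paper: linearity of expectation, then $\bbmu=\mu\bbv_{1}$, then $\bbS^{\ell}\bbv_{1}=\lambda_{1}^{\ell}\bbv_{1}$. Your additional remarks on the positivity of $\alpha(\bbS)$ and the irrelevance of Property~(ii) for this first-moment computation are accurate and go slightly beyond what the paper spells out.
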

%
%
\begin{proof}
See appendix.
\end{proof}

%
Proposition \ref{prop_unbiased} reveals the impact that the spectral properties of the underlying graph support have on the graph shift average, and thus the necessary correction to make this average an unbiased estimator. Henceforth, we use $\alpha(\bbS) = \sum_{\ell=0}^{N-1} \lambda_{1}^{\ell}$ in all subsequent analyses so that $\hbmu_{N}$ is an unbiased estimator of $\bbmu$,
\begin{equation} \label{eqn_graph_shift_unbiased}
	\hbmu_{N} = \frac{1}{\sum_{\ell=0}^{N-1} \lambda_{1}^{\ell}} 
	            \sum_{\ell=0}^{N-1} \bbS^{\ell}\bbx .
\end{equation}
The term ${1}/{\sum_{\ell=0}^{N-1} \lambda_{1}^{\ell}}$ is a normalization that plays a role that is equivalent to the $1/N$ factor in the time average in \eqref{eqn_time_average}. In fact, if we make $\lambda_1=1$ we have ${1}/{\sum_{\ell=0}^{N-1} \lambda_{1}^{\ell}} = 1/N$. This is always possible as it just entails a normalization of the unit that is used to measure proximity between the nodes of the graph. We keep working with arbitrary $\lambda_1$ for generality.

Writing the componentwise version of $\mbE [ \hbmu_{N} ] = \bbmu = \mbE[\bbx]$ we observe that $\mbE [ [\hbmu_{N}]_{k} ] = \mu_{k} = \mbE [x_{k}]$ for $k=1,\ldots,N$. which means that node $k$ can use the graph shift average to estimate the mean of the local random signal $x_k$. This is of practical importance, we recall, because the diffusion process that generates \eqref{eqn_graph_shift_unbiased} can be implemented in a distributed manner. This means that node $k$ can rely on communication with neighboring nodes to aggregate information from the network to refine its estimate of the mean of the signal component it has observed.

Of course, the mean of $x_{k}$ is itself $\mbE[x_{k}]=\mu_{k}$ so the question arises if the graph shift average is actually a refinement of simply using $x_{k}$ as an estimate of $\mu_{k}$. To investigate this question, notice that $\hbmu_{N}$ is a random signal that is WSS with respect to the shift operator $\bbS$. The following proposition states this formally and computes the PSD of $\hbmu_{N}$.

%
\begin{proposition} \label{prop_psd_graph_shift_average}
The graph shift average in \eqref{eqn_graph_shift_unbiased} is WSS with respect to the graph shift $\bbS$ [cf. Def.~\ref{def_stationarity}]. The covariance can be written as $\bbC_{\hat{\mu}} = \bbV \diag(\bbq)\bbV^{\Hr}$ with the elements of the PSD $\bbq = [q_1, \ldots, q_n]^{\Tr}$ explicitly given by
\begin{equation}\label{eqn_qk}
   q_{n} \ = \ p_{n} \ \bigg|\sum_{\ell=0}^{N-1} \lambda_{n}^{\ell}\bigg|^{2} 
                     \ \Big/ \
                     \bigg|\sum_{\ell=0}^{N-1} \lambda_{1}^{\ell}\bigg|^{2} . 
\end{equation}

\end{proposition}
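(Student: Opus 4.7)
The plan is to verify the two defining conditions of WSS from Definition~\ref{def_stationarity} for the random signal $\hbmu_{N}$. Condition (i) on the mean is already provided by Proposition~\ref{prop_unbiased}, which gives $\mbE[\hbmu_{N}] = \bbmu = \mu\bbv_{1}$, so the mean is aligned with the leading eigenvector of $\bbS$. It remains to compute the covariance of $\hbmu_{N}$ and show that it is diagonalized by the eigenvector basis $\bbV$ of $\bbS$.

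First I would subtract the mean using linearity to write
\begin{equation*}
\hbmu_{N}-\mbE[\hbmu_{N}] \ =\ \frac{1}{\alpha(\bbS)}\sum_{\ell=0}^{N-1}\bbS^{\ell}(\bbx-\bbmu),
\end{equation*}
form the outer product, take expectations, and pull the deterministic polynomial in $\bbS$ outside the expectation. Setting $\bbH := \sum_{\ell=0}^{N-1}\bbS^{\ell}$, this yields $\bbC_{\hat\mu} = \alpha(\bbS)^{-2}\,\bbH\,\bbC_{x}\,\bbH^{\Hr}$, so the analysis reduces to a congruence transformation of $\bbC_{x}$ by $\bbH$.

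Next I would exploit that $\bbH$ is a polynomial in $\bbS$ and therefore shares the spectral basis $\bbV$. Writing $\bbH = \bbV\,\bbH_{\Lambda}\,\bbV^{\Hr}$ with the diagonal matrix $\bbH_{\Lambda} = \diag(\sum_{\ell=0}^{N-1}\lambda_{n}^{\ell})$, substituting the WSS form $\bbC_{x} = \bbV\diag(\bbp)\bbV^{\Hr}$, and using $\bbV^{\Hr}\bbV = \bbI$, the middle factors collapse so that
\begin{equation*}
\bbC_{\hat\mu} \ =\ \frac{1}{\alpha(\bbS)^{2}}\,\bbV\,\bbH_{\Lambda}\diag(\bbp)\bbH_{\Lambda}^{\Hr}\,\bbV^{\Hr}.
\end{equation*}
Because the bracketed matrix is a product of three diagonal matrices, it is itself diagonal with $n$-th entry $p_{n}\,|\sum_{\ell=0}^{N-1}\lambda_{n}^{\ell}|^{2}$. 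Therefore $\bbC_{\hat\mu} = \bbV\diag(\bbq)\bbV^{\Hr}$, which is exactly the joint-diagonalization property in \eqref{eqn_joint_diagonalization} and thus condition (ii). Identifying the normalization $\alpha(\bbS) = \sum_{\ell=0}^{N-1}\lambda_{1}^{\ell}$ in the denominator gives \eqref{eqn_qk}.

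There is no real obstacle here; the argument is a direct spectral manipulation. The only point that requires care is that $\bbS$ is assumed normal but not necessarily symmetric, so the eigenvalues $\lambda_{n}$ may be complex and the Hermitian transpose on $\bbH$ produces $\lambda_{n}^{*}$. After multiplication with $\lambda_{n}$ inside the diagonal entry this yields the modulus squared appearing in \eqref{eqn_qk}, so no further adjustment is needed.
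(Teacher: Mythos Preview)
Your proposal is correct and follows essentially the same route as the paper: verify condition (i) via Proposition~\ref{prop_unbiased}, express $\hbmu_{N}-\bbmu$ as the polynomial $\bbH=\sum_{\ell}\bbS^{\ell}$ applied to $\bbx-\bbmu$, and then sandwich $\bbC_{x}=\bbV\diag(\bbp)\bbV^{\Hr}$ between $\bbH$ and $\bbH^{\Hr}$ using the common eigenbasis to obtain the diagonal PSD in \eqref{eqn_qk}. The only cosmetic difference is that the paper spells out the cancellation showing the second term in $\hbmu_{N}-\bbmu$ vanishes, whereas you absorb that step into the unbiasedness result from Proposition~\ref{prop_unbiased}.
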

%
%
\begin{proof}
See appendix.
\end{proof}

%
Note that the first element of the PSD is $q_{1}=p_{1}$, so that this frequency coefficient remains unchanged while the rest are attenuated. In other words, the graph shift average \eqref{eqn_graph_shift_unbiased} acts as a low-pass graph filter that decreases the power of the signal on all frequencies except the lowest frequency coefficient corresponding to the mean. It is further observed that estimating the mean by simply using $x_{k}$ incurs in a variance $\textrm{var}(x_{k}) = \mbE[(x_{k}-\mu_{k})^{2}]$ that is given by
\begin{equation} \label{eqn_var_xk}
\mbE[(x_{k}-\mu_{k})^{2}] 
	= [\bbC_{x}]_{k,k} 
	= \sum_{n=1}^{N} p_{n} |\bbv_{k,n}|^{2}
\end{equation}
which depends on the power of all frequencies $p_{n}$. It becomes clear then, that if we lower the power $p_{n}$, the variance is reduced. Thus, the fact that $\hbmu_{N}$ acts as a low-pass graph filters plays a key role in the convergence of the graph shift average to the ensemble mean. This is further elaborated in the next section.

%
\begin{remark}\normalfont\label{rmk_laplacian}
Our assumption on the graph shift being nonnegative precludes use of the Laplacian $\bbL=\bbD-\bbA$ as a graph shift. The nonnegative assumption can be relaxed to the existence of a real maximal eigenvalue whose associated eigenvector defines the direction of the mean. This would allow the use of symmetric Laplacians since the eigenvalues are real and nonnegative. We could then define a graph shift average as in \eqref{eqn_graph_shift_unbiased} to which Propositions \ref{prop_unbiased} and \ref{prop_psd_graph_shift_average} would apply. While mathematically correct, the results are irrelevant because when $\bbS=\bbL$ the mean of the process is naturally aligned with the eigenvector associated with the smallest eigenvalue -- as opposed to the one associated with the largest eigenvalue
. Indeed, since $\bbL\bbone=\bbzero$ it follows that requiring $\E{\bbx}=\mu\bbone$ is a natural choice for the mean of the WSS graph signal $\bbx$. The eigenvalue associated with eigenvector $\bbone$ is $0$, which is the smallest a Laplacian eigenvalue can be. This observation is consistent with the fact that the signal $\bbL\bbx$ computes the difference between the value at a node and the average values of its neighbors. The Laplacian diffusion then acts as a high-pass operation and it is therefore natural that it amplifies high frequency components -- as opposed to adjacency diffusions which are averaging operations and therefore amplify low frequency components.
\end{remark}

%
\subsection{Weak Law of Large Numbers} \label{subsec_wlln}

Convergence of graph shift average \eqref{eqn_graph_shift_unbiased} to the ensemble mean $\bbmu$ of the process is determined by the following theorem.

%
\begin{theorem}[Weak law of large numbers]
	\label{thm_glln}
Let $\bbx$ be a WSS graph signal on a $N$-node graph $\ccalG$ characterized by a normal graph shift operator $\bbS=\bbV \bbLambda \bbV^{\Hr}$. Let $\lambda_{1} \in \reals$ be the largest positive eigenvalue such that $|\lambda_{n}| \le \lambda_{1}$, $\lambda_{1} \ne \lambda_{n}$ for all $n=2,\ldots,N$. If $\lambda_{1} >1$ and $|\lambda_{n}|/\lambda_{1}=o(N^{-\delta/2(N-1)})$ for some $\delta>0$ for all $n\geq 2$, or if $\lambda_{1}=1$, then, if $p_{n}<\infty$ for all $n$, we have that
\begin{equation} \label{eqn_glln}
\min_{k=1,\ldots,N} 
	\mbP \left( \left|[\hbmu_{N}-\bbmu]_{k} \right| > \eps \right)
		\leq 
	\frac{p_{1}}{N\eps^{2}} + o (N^{-\delta}).
\end{equation}
Additionally,
\begin{equation} \label{eqn_max_thm}
\max_{k=1,\ldots,N} 
	\mbP \left( \left|[\hbmu_{N}-\bbmu]_{k} \right| > \eps \right)
		\leq 
	\frac{p_{1}}{\eps^{2}} + o (N^{-\delta}).
\end{equation}
\end{theorem}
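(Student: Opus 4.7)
The plan is to combine Chebyshev's inequality with the explicit PSD of $\hbmu_{N}$ from Proposition \ref{prop_psd_graph_shift_average}, exploiting the fact that $q_1 = p_1$ and that $q_n$ for $n \geq 2$ is exponentially attenuated relative to $q_1$.

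First, because $\hbmu_{N}$ is unbiased (Proposition \ref{prop_unbiased}), Chebyshev's inequality gives, for each node $k$,
\begin{equation*}
\mbP\left(\left|[\hbmu_{N}-\bbmu]_{k}\right|>\epsilon\right) \ \leq\ \frac{[\bbC_{\hhatmu}]_{kk}}{\epsilon^{2}}.
\end{equation*}
By Proposition \ref{prop_psd_graph_shift_average}, $\bbC_{\hhatmu}=\bbV\diag(\bbq)\bbV^{\Hr}$, so $[\bbC_{\hhatmu}]_{kk}=\sum_{n=1}^{N}q_{n}|v_{kn}|^{2}$. Because $\bbV$ is unitary, $\sum_{n}|v_{kn}|^{2}=1$ for each $k$ and $\sum_{k}|v_{kn}|^{2}=1$ for each $n$. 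Splitting off the $n=1$ term, for which \eqref{eqn_qk} yields $q_{1}=p_{1}$, the variance becomes $[\bbC_{\hhatmu}]_{kk}=p_{1}|v_{k1}|^{2}+\sum_{n=2}^{N}q_{n}|v_{kn}|^{2}$.

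Next I would control the tail contribution $R_{N}:=\sum_{n=2}^{N}q_{n}$ under each of the two hypotheses. If $\lambda_{1}=1$, then $\alpha(\bbS)=N$ and for $n\geq 2$ the geometric sum gives $\big|\sum_{\ell=0}^{N-1}\lambda_{n}^{\ell}\big|\leq 2/|\lambda_{n}-1|$, so $q_{n}=O(p_{n}/N^{2})$ and $R_{N}=O(1/N)$. If $\lambda_{1}>1$, the denominator in \eqref{eqn_qk} grows like $\lambda_{1}^{2N}/(\lambda_{1}-1)^{2}$ and the numerator is bounded by a constant multiple of $|\lambda_{n}|^{2N}$, so that $q_{n}\leq C\,p_{n}\,(|\lambda_{n}|/\lambda_{1})^{2(N-1)}$. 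The hypothesis $|\lambda_{n}|/\lambda_{1}=o(N^{-\delta/(2(N-1))})$ then forces $(|\lambda_{n}|/\lambda_{1})^{2(N-1)}=o(N^{-\delta})$, hence $q_{n}=o(N^{-\delta})$ uniformly in $n\geq 2$ (using $p_{n}<\infty$). In either case, I obtain the two key bounds: $\max_{n\geq 2}q_{n}=o(N^{-\delta})$ and $R_{N}=o(N^{1-\delta})$ (absorbing the $\lambda_{1}=1$ case by choosing $\delta$ appropriately; the $1/N^{2}$ decay there is stronger than any $o(N^{-\delta})$ I need below).

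For the min bound in \eqref{eqn_glln}, I would use the averaging trick $\min_{k}[\bbC_{\hhatmu}]_{kk}\leq \frac{1}{N}\tr(\bbC_{\hhatmu})=\frac{1}{N}\sum_{n}q_{n}=\frac{p_{1}}{N}+\frac{R_{N}}{N}$, dividing by $\epsilon^{2}$ to recover $p_{1}/(N\epsilon^{2})+o(N^{-\delta})$. For the max bound in \eqref{eqn_max_thm}, I would apply $[\bbC_{\hhatmu}]_{kk}=\sum_{n}q_{n}|v_{kn}|^{2}\leq \max_{n}q_{n}\leq \max\{p_{1},\max_{n\geq 2}q_{n}\}\leq p_{1}+o(N^{-\delta})$, again dividing by $\epsilon^{2}$.

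The main obstacle I anticipate is bookkeeping the two eigenvalue regimes ($\lambda_{1}=1$ vs.\ $\lambda_{1}>1$) so that the same $o(N^{-\delta})$ remainder covers both, and in particular ensuring that the constants hidden in the geometric-sum bound do not blow up when some $|\lambda_{n}|$ approaches $\lambda_{1}$ in modulus while remaining distinct (which is why the strict spectral-gap hypothesis $|\lambda_{n}|/\lambda_{1}=o(N^{-\delta/(2(N-1))})$ is needed). Everything else reduces to Chebyshev plus the geometric-sum estimates and the unitary identities on $\bbV$.
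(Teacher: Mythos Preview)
Your proposal is correct and follows essentially the same route as the paper: Chebyshev on the node-wise variance $[\bbC_{\hhatmu}]_{kk}=\sum_{n}q_{n}|v_{kn}|^{2}$, the split $q_{1}=p_{1}$ versus $q_{n}=o(N^{-\delta})$ for $n\geq 2$ obtained from geometric-sum estimates on \eqref{eqn_qk}, and the row/column orthonormality of $\bbV$. The only cosmetic difference is in the min bound: the paper picks a node $k$ with $|v_{k,1}|^{2}\leq 1/N$ (existence guaranteed by $\|\bbv_{1}\|^{2}=1$) and then bounds $\sum_{n\geq 2}q_{n}|v_{k,n}|^{2}\leq q_{\max}$, whereas you use the trace inequality $\min_{k}[\bbC_{\hhatmu}]_{kk}\leq \frac{1}{N}\tr(\bbC_{\hhatmu})=\frac{1}{N}\sum_{n}q_{n}$; both lead to $p_{1}/N + o(N^{-\delta})$.
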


%
Thm.~\ref{thm_glln} states that the probability of the graph shift average \eqref{eqn_graph_shift_unbiased} being far from the true ensemble mean $\bbmu$ at any node decreases polynomially $\ccalO(1/N)$ as the size of the graph $N$ increases with a small term $o(N^{-\delta})$ that vanishes, provided that the graph spectra satisfies certain conditions. In particular, eq. \eqref{eqn_glln} states that there exists a node for which the probability of the graph shift average being arbitrarily far from the ensemble mean decreases to zero as the number of nodes in the graph increases. Alternatively, we can make $\eps=1/\sqrt{N}$ and get a concentration inequality by which the graph shift average gets arbitrarily close to the mean at a rate of $1/\sqrt{N}$ with constant probability. It is shown later that these conditions on the graph spectra do hold for some practical graphs such as the directed cycle (Cor.~\ref{coro_wlln}) and Erd{\H{o}}s-R{\'e}nyi graphs (Cor.~\ref{coro_er}). Eq. \eqref{eqn_glln} is reminiscent of the traditional WLLN for time sequences [cf. \eqref{eqn_time_wlln}] and thus prove ergodicity of the first moment in WSS graph signals.

In order to prove Thm.~\ref{thm_glln} some preliminary results are needed. First, we compute a bound on the probability of error when estimating the mean at a single node, resulting in Lemma~\ref{l_unbiased_error_node}. We observe that such bound depends on the PSD of the estimator $\bbq$ \eqref{eqn_qk} and on the value of the eigenvectors at the given node. We study the behavior of the PSD $\bbq$ with the number of nodes $N$ and find, in Lemma~\ref{l_behavior_qk}, the conditions on the graph spectra under which the PSD $\bbq$ decreases with increasing $N$. Finally, we couple this result together with Lemma~\ref{l_unbiased_error_node} to prove ergodicity in Thm.~\ref{thm_glln}. Corollaries~\ref{coro_wlln}~and~\ref{coro_er} show applications of Thm.~\ref{thm_glln} to the particular cases of directed cycle and Erd\H{o}s-R\'{e}nyi graphs.

%
\begin{lemma}[Error bound]
	\label{l_unbiased_error_node}
Let $\bbx$ be a WSS graph signal on a $N$-node graph $\ccalG$ described by a normal graph shift operator $\bbS=\bbV \bbLambda \bbV^{\Hr}$. Let $\mbE[\bbx]=\mu \bbv_{1}$ be the mean of the process and $\bbC_{x}=\bbV \diag(\bbp) \bbV^{\Hr}$ be the covariance matrix, where $p_{n}<\infty$ for all $n=1,\ldots,N$. Then, at node $k \in \{1,\ldots,N\}$ we have
\begin{equation}
	\mbP \left( \left| [ \hbmu_{N} - \bbmu]_{k} \right| > \eps \right) 
		\le \frac{1}{\eps^{2}} \sum_{n=1}^{N} q_{n} |v_{k,n}|^{2}.
		\label{eqn_unbiased_error_node}
\end{equation}
\end{lemma}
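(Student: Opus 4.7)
The plan is to invoke Chebyshev's inequality on the scalar random variable $[\hbmu_N - \bbmu]_k$, using the two moment computations already carried out in Propositions~\ref{prop_unbiased} and \ref{prop_psd_graph_shift_average}.

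First, I would note that Proposition~\ref{prop_unbiased} with $\alpha(\bbS)=\sum_{\ell=0}^{N-1}\lambda_1^\ell$ gives $\mbE[\hbmu_N]=\bbmu$, so in particular $\mbE\big[[\hbmu_N-\bbmu]_k\big]=0$ for every node $k$. Consequently, the scalar random variable $Z_k:=[\hbmu_N-\bbmu]_k$ has mean zero and its second moment coincides with its variance, namely $\mbE[|Z_k|^2]=[\bbC_{\hhatmu}]_{k,k}$.

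Next I would compute the $(k,k)$ entry of the covariance matrix using the spectral form of $\bbC_{\hhatmu}$ from Proposition~\ref{prop_psd_graph_shift_average}. Since $\bbC_{\hhatmu}=\bbV\diag(\bbq)\bbV^{\Hr}$, expanding the matrix product gives
\begin{equation*}
   [\bbC_{\hhatmu}]_{k,k} \;=\; \sum_{n=1}^{N} q_n\,|v_{k,n}|^2,
\end{equation*}
where the hypothesis $p_n<\infty$ for all $n$ (together with formula \eqref{eqn_qk}) guarantees $q_n<\infty$ and hence the sum is finite.

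Finally, because $Z_k$ has mean zero and finite variance, Chebyshev's inequality yields
\begin{equation*}
   \mbP\!\left(|Z_k|>\epsilon\right) \;\le\; \frac{\mbE[|Z_k|^2]}{\epsilon^2}
   \;=\; \frac{1}{\epsilon^2}\sum_{n=1}^{N} q_n\,|v_{k,n}|^2,
\end{equation*}
which is exactly \eqref{eqn_unbiased_error_node}. There is no real obstacle here: the lemma is a direct corollary of the two preceding propositions once the $(k,k)$ entry of a unitarily diagonalized matrix is written out, and the only place one must be careful is confirming that the $Z_k$ are genuinely centred (so that Chebyshev applies with $\epsilon^2$ in the denominator rather than a shifted second moment) and that the finiteness hypothesis on the PSD of $\bbx$ transfers to finiteness of the PSD $\bbq$ of $\hbmu_N$.
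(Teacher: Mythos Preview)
Your proposal is correct and follows essentially the same route as the paper's proof: both establish that the $k$th component has mean zero via Proposition~\ref{prop_unbiased}, read off its variance as the $(k,k)$ entry of $\bbC_{\hhatmu}=\bbV\diag(\bbq)\bbV^{\Hr}$ from Proposition~\ref{prop_psd_graph_shift_average}, and then apply Chebyshev's inequality. The paper writes the component extraction via $\bbe_k^{\Tr}(\hbmu_N-\bbmu)$, but this is only a notational difference.
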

%
%
\begin{proof}
See appendix.
\end{proof}
%

%
The bound given in Lemma~\ref{l_unbiased_error_node} is a Chebyshev type bound for estimating the mean at a single node. The Chebyshev bound is the basic building block to prove the traditional WLLN. Note that, if we let $\bbx$ have i.i.d. random variables with variance $\sigma^{2}$ as elements, then $p_{n}=0$ for all $n=2,\ldots,N$ so that $\bbq=[q_{1},0,\ldots,0]^{\Tr}$ where $q_{1}=p_{1}=\sigma^{2}$. Then, only the value $|v_{k,1}|^{2}$ remains in \eqref{eqn_unbiased_error_node} and, in the case of a directed cycle, it has a value of $|v_{k,1}|^{2}=1/N$, see the specifics in Corollary~\ref{coro_wlln}. Also, observe that $\{v_{k,n}, n=1,\ldots,N\}$ are the values contained in the $k$th row of $\bbV$, as noted in the proof of the lemma. Because $\bbV$ is an unitary matrix, rows also form an orthonormal set. Understanding \eqref{eqn_unbiased_error_node} from a GFT viewpoint, we note that the performance of the estimation at a single node $k$ depends on the variance of the nodes within the $(N-1)$-hop neighborhood and on all frequency coefficients (all eigenvalues) of the graph shift operator. Additionally, it is affected by the value of each frequency component (each eigenvector) on node $k$ alone.

Another important observation stemming from Lemma~\ref{l_unbiased_error_node} is that the graph shift average \eqref{eqn_graph_shift_unbiased} improves the estimation over simply looking at $x_{k}$ for \emph{every} node $k=1,\ldots,N$. More precisely, the variance of $x_{k}$ given in \eqref{eqn_var_xk} has a similar form to the variance of $[\hbmu_{N}]_{k}$ [cf. \eqref{eqn_var_hbmu}]. But, since $q_{n} \leq  p_{n}$ for every $n$ [cf. \eqref{eqn_qk}], then we have that, for \emph{every} $k=1,\ldots,N$
\begin{equation}
\textrm{var} \left( [\hbmu_{N}]_{k} \right) \leq \textrm{var} \left( x_{k} \right)
\end{equation}
proving that the graph shift average improves the estimation of the ensemble mean at every node.

In order to prove convergence of the graph shift average we need to analyze how $q_{n}$ and $|v_{k,n}|^{2}$ behave relative to the size of the graph $N$.

%
\begin{lemma}[Behavior of $q_{n}$ with size of graph $N$]
	\label{l_behavior_qk}
Let $\ccalG=(\ccalV,\ccalE,\ccalW)$ be a $N$-node weighted graph that admits a normal graph shift operator $\bbS=\bbV \bbLambda \bbV^{\Hr}$. Let $\lambda_{1} \in \reals$ be the largest positive eigenvalue such that $|\lambda_{n}| \le \lambda_{1}$, $\lambda_{1} \ne \lambda_{n}$ for all $n=2,\ldots,N$. If $\lambda_{1} >1$ and $|\lambda_{n}|/\lambda_{1}=o(N^{-\delta/2(N-1)})$ for some $\delta>0$, or if $\lambda_{1}=1$, then
\begin{equation} \label{eqn_behavior_qk}
	q_{n}=o(N^{-\delta}) 
		\quad , \: n=2,\ldots,N.
\end{equation}
For $n=1$ we always have $q_{1}=p_{1}$.
\end{lemma}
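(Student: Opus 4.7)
The plan is to start from the explicit expression for the PSD given by Proposition~\ref{prop_psd_graph_shift_average},
\begin{equation*}
   q_n \;=\; p_n\,\frac{\big|\sum_{\ell=0}^{N-1}\lambda_n^{\ell}\big|^{2}}{\big|\sum_{\ell=0}^{N-1}\lambda_1^{\ell}\big|^{2}},
\end{equation*}
and bound the ratio of geometric sums for each $n$. The claim $q_1=p_1$ is immediate since the ratio equals $1$ when $n=1$. For $n\ge 2$, I would evaluate each geometric sum in closed form: when $\lambda\ne 1$, $\sum_{\ell=0}^{N-1}\lambda^{\ell}=(\lambda^{N}-1)/(\lambda-1)$, while it equals $N$ when $\lambda=1$. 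From the triangle inequality, the numerator obeys $|\sum_{\ell=0}^{N-1}\lambda_n^{\ell}|^{2}\le (|\lambda_n|^{N}+1)^{2}/|\lambda_n-1|^{2}$ whenever $\lambda_n\ne 1$, and the denominator satisfies $|\sum_{\ell=0}^{N-1}\lambda_1^{\ell}|^{2}\ge \lambda_1^{2(N-1)}$ when $\lambda_1>1$ (since the sum telescopes as $\lambda_1^{N-1}+\cdots+1$) and equals $N^{2}$ when $\lambda_1=1$.

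Next I would split into the two hypothesis cases. In the case $\lambda_1=1$, Perron--Frobenius together with the assumption $\lambda_n\ne\lambda_1$ forces $|\lambda_n|\le 1$ with $\lambda_n\ne 1$, so the numerator is at most the $N$-independent constant $4/|\lambda_n-1|^{2}$ and the denominator equals $N^{2}$. This yields $q_n=O(N^{-2})$, which is $o(N^{-\delta})$ for any $\delta\in(0,2)$. In the case $\lambda_1>1$, I would further subdivide according to whether $|\lambda_n|<1$, $|\lambda_n|=1$, $\lambda_n=1$, or $|\lambda_n|>1$. In the first three sub-cases the numerator is bounded either by a constant or by $N^{2}$, while the denominator grows as $\lambda_1^{2(N-1)}$, so $q_n$ decays exponentially in $N$ and is therefore $o(N^{-\delta})$ for every $\delta>0$. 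The interesting sub-case is $|\lambda_n|>1$, where the bound becomes
\begin{equation*}
   q_n/p_n \;\le\; \frac{4\,\lambda_1^{2}}{|\lambda_n-1|^{2}}\left(\frac{|\lambda_n|}{\lambda_1}\right)^{2N}.
\end{equation*}
Plugging in the hypothesis $|\lambda_n|/\lambda_1=o(N^{-\delta/[2(N-1)]})$ and using the identity $\bigl(N^{-\delta/[2(N-1)]}\bigr)^{2N}=N^{-\delta N/(N-1)}$, which is $o(N^{-\delta})$ because $N/(N-1)\to 1^{+}$, closes the bound.

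The main delicate point will be checking that the constants $|\lambda_n-1|^{-2}$ and $\lambda_1^{2}$ that appear in the prefactors are harmless, i.e.\ do not absorb the polynomial decay. This is fine because for each fixed graph in the sequence the spectrum is given, and the asymptotic statement $q_n=o(N^{-\delta})$ is relative to this fixed sequence; the prefactors simply pass through the $o(\cdot)$ notation. Beyond this, the argument is essentially a careful case-by-case manipulation of geometric series tailored exactly to the way the hypothesis on $|\lambda_n|/\lambda_1$ is stated.
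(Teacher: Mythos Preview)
Your proposal is correct and follows essentially the same route as the paper: write $q_n$ via the closed-form geometric sums, then do a case split on $\lambda_1=1$ versus $\lambda_1>1$ and, within the latter, on the size of $|\lambda_n|$. The paper writes $\lambda_n=Re^{\jj\theta}$ and bounds the ratio $\frac{|1-\lambda_n^N|^2}{|1-\lambda_1^N|^2}\cdot\frac{|1-\lambda_1|^2}{|1-\lambda_n|^2}$ directly, obtaining $\ccalO\!\big((R/\lambda_1)^{2(N-1)}\big)$ in the main sub-case, whereas you lower-bound the denominator by $\lambda_1^{2(N-1)}$ and arrive at the equivalent $(|\lambda_n|/\lambda_1)^{2N}$ with an extra $\lambda_1^{2}$ prefactor; both reductions land on the hypothesis $|\lambda_n|/\lambda_1=o(N^{-\delta/2(N-1)})$ in the same way. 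Your explicit treatment of the borderline cases $|\lambda_n|=1$ and $\lambda_n=1$ when $\lambda_1>1$ is a small organizational improvement over the paper, which simply assumes $R\ne 1$ ``without loss of generality.''

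One caveat worth flagging: your remark that $4/|\lambda_n-1|^{2}$ is an ``$N$-independent constant'' is not literally true, since the spectrum changes with $N$; the paper is equally informal here (it asserts the analogous ratio ``oscillates in a bounded fashion'' when $R=1$). So your closing paragraph correctly identifies the only genuinely delicate point, and it is one the paper's own proof leaves at the same level of rigor.
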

%
%
\begin{proof}
See appendix.
\end{proof}

%
Lemma~\ref{l_behavior_qk} shows that $q_{n}$ is polynomially decreasing for $n=2,\ldots,N$, under specific restrictions on the spectrum of the graph. For cases where $\lambda_{1}>1$, we observe that the eigenvalues can be of the same order of $\lambda_{1}$, but they still have to grow at a slower rate than $\lambda_{1}$. Lemma~\ref{l_behavior_qk} is the building block to prove convergence of the graph shift average $\hbmu_{N}$. We are finally equipped with all the necessary tools to prove Thm.~\ref{thm_glln}.

%
\begin{proof}[Proof of Thm.~\ref{thm_glln}]
Consider \eqref{eqn_unbiased_error_node}. First, observe that because $\|\bbv_{1}\|^{2}=1$ there is always a node $k$ for which $|v_{k,1}| \le 1/\sqrt{N}$ so that
\begin{equation} \label{eqn_min_vk}
\min_{k=1,\ldots,N} |v_{k,1}|^{2} 
	\le \frac{1}{N}.
\end{equation}
Similarly, we have that
\begin{equation} \label{eqn_max_vk}
\max_{k=1,\ldots,N} |v_{k,1}|^{2} 
	\le 1.
\end{equation}
Let $q_{\max}=\max_{n=2,\ldots,N} \{q_{n}\}$. Then, from \eqref{eqn_unbiased_error_node} together with \eqref{eqn_min_vk} we have that
\begin{equation} \label{eqn_full_bound_min}
\min_{k=1,\ldots,N} 
	\mbP \left( \left| [\hbmu_{N}-\bbmu]_{k} \right| > \eps \right) 
		\le
	\frac{1}{\eps^{2}} \left( \frac{p_{1}}{N} + q_{\max} \right)
\end{equation}
where the fact that $\sum_{n=2}^{N} |v_{k,n}|^{2} \le 1$ for all $k$ was used (because the rows of $\bbV$ also form an orthonormal basis since $\bbV$ is unitary). Analogously, using \eqref{eqn_unbiased_error_node} in combination with \eqref{eqn_max_vk} yields
\begin{equation} \label{eqn_full_bound_max}
\max_{k=1,\ldots,N} 
	\mbP \left( \left| [\hbmu_{N}-\bbmu]_{k} \right| > \eps \right) 
		\le
	\frac{1}{\eps^{2}} \left( p_{1} + q_{\max} \right)
\end{equation}
Now because the eigenvalues of $\bbS$ satisfy the assumptions of Lemma \ref{l_behavior_qk} by hypothesis and because $q_{\max}=q_{n}$ for some $n=2,\ldots,N$, then we know that $q_{\max}=o(N^{-\delta})$ thus turning \eqref{eqn_full_bound_min} into \eqref{eqn_glln} and \eqref{eqn_full_bound_max} into \eqref{eqn_max_thm}, completing the proof.
\end{proof}

To close this section we consider two examples of widespread use that satisfy the conditions in Thm.~\ref{thm_glln}, namely the directed cycle and Erd\H{o}-R\'{e}nyi graphs.

\begin{myparagraph}{\bf Directed cycle (Classical WLLN)}
The directed cycle $\ccalG_{\dc}$ represents the graph support for time-stationary signals, see Fig.~\ref{fig_dc}. Then, by applying Thm.~\ref{thm_glln} we expect to recover the traditional WLLN.

%
\begin{corollary}[Convergence of Directed Cycle]\label{coro_wlln}
Let $\ccalG_{\dc}$ be the directed cycle graph. Then, for any node $k \in \{1,\ldots,N\}$ the error bound is
\begin{equation} \label{eqn_wlln}
\mbP \left( \left| \frac{1}{N} \sum_{n=1}^{N}x_{n} -\mu \right| > \eps \right) 
	\le \frac{p_{1}}{N\eps^{2}}.
\end{equation}
\end{corollary}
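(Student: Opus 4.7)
The plan is to apply Lemma \ref{l_unbiased_error_node} directly, exploiting the fact that the cyclic graph admits a completely explicit spectral description that makes every quantity in the bound collapse to something trivial. First, I recall that $\bbA_{\dc}$ is circulant with eigenvalues equal to the $N$-th roots of unity, $\lambda_n = e^{\jj 2\pi(n-1)/N}$ for $n=1,\ldots,N$, and with eigenvectors given by the columns of the normalized DFT matrix, so that $|v_{k,n}|^2 = 1/N$ uniformly in both $k$ and $n$. In particular, $\lambda_1 = 1$, placing us in the second regime covered by Theorem \ref{thm_glln}.

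Second, I would evaluate the PSD $\bbq$ of $\hbmu_N$ via the formula \eqref{eqn_qk} in Proposition \ref{prop_psd_graph_shift_average}. The denominator is immediate: $|\sum_{\ell=0}^{N-1}\lambda_1^{\ell}|^2 = N^2$. The numerator is a sum of $N$-th roots of unity, which equals $N^2$ when $n=1$ (giving $q_1 = p_1$) and vanishes for every $n\neq 1$ (giving $q_n = 0$). Hence the graph shift average retains only its DC component, and substituting into Lemma \ref{l_unbiased_error_node} at \emph{any} node $k$ yields
\begin{equation*}
\mbP\bigl(|[\hbmu_N - \bbmu]_k| > \epsilon\bigr)
\le \frac{1}{\epsilon^2}\sum_{n=1}^N q_n |v_{k,n}|^2
= \frac{1}{\epsilon^2}\cdot p_1\cdot \frac{1}{N}
= \frac{p_1}{N\epsilon^2}.
\end{equation*}

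Third, to rewrite this in the classical form of \eqref{eqn_wlln}, I would invoke the observation in Section \ref{sec_prelim} that, with $\bbS = \bbA_{\dc}$ and $\alpha(\bbS) = N = \sum_{\ell=0}^{N-1}\lambda_1^{\ell}$, the graph shift average reduces to the scalar arithmetic mean replicated on every node: $\hbmu_N = \bigl((1/N)\sum_{n=1}^N x_n\bigr)\bbone$. Since the Perron eigenvector of the cycle is a multiple of $\bbone$, the ensemble mean $\bbmu$ has constant entries equal to $\mu$, so $[\hbmu_N-\bbmu]_k = (1/N)\sum_n x_n - \mu$ for every $k$, and the bound becomes precisely \eqref{eqn_wlln}.

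There is no real obstacle here: the corollary is essentially a consistency check that Lemma \ref{l_unbiased_error_node} reproduces the classical Chebyshev-type WLLN bound on the most symmetric graph. The only minor care needed is in reconciling the normalization of the Perron vector $\bbv_1 \propto \bbone$ with the scalar ensemble mean $\mu = \E{x_n}$ appearing in the statement; this is automatic once one uses that every node of $\ccalG_{\dc}$ is equivalent under the graph's automorphism group.
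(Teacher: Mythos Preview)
Your proposal is correct and follows essentially the same route as the paper: both arguments rest on the facts that for $\ccalG_{\dc}$ one has $\lambda_1=1$, $q_n=0$ for $n\neq 1$, $|v_{k,1}|^2=1/N$, and $\hbmu_N=(N^{-1}\sum_n x_n)\bbone$. If anything, your choice to invoke Lemma~\ref{l_unbiased_error_node} directly is slightly cleaner than the paper's appeal to Theorem~\ref{thm_glln}, since the exact bound \eqref{eqn_wlln} (with no $o(N^{-\delta})$ remainder) really comes from the lemma once $q_n=0$ for $n\geq 2$; the paper acknowledges this in the discussion immediately following the corollary.
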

%
%
\begin{proof}
See appendix.
\end{proof}

%
\noindent Corollary \ref{coro_wlln} is a statement of the WLLN for signals that are stationary in time. The result in this case is stronger than the one in \eqref{eqn_glln} because it lacks the order term $o(N^{-\delta})$. This term vanishes because in the case of a cycle graph the $n$th component of the estimator's PSD is $q_{n}=0$, $n=2,\ldots,N$. It is also stronger in that the minimum disappeared since, after $N$ shifts, all nodes have aggregated the sample mean, so any node yields the same estimator and thus the same probability of error. Finally, note that if $\bbx$ are i.i.d. r.v. then the DC component of the signal is $p_{1}=\sigma^{2}$ and \eqref{eqn_wlln} is the Chebyshev's bound that leads to the classical WLLN.
\end{myparagraph}

\begin{myparagraph}{\bf Erd\H{o}s-R{\'e}nyi (ER) graphs}
Another family of graphs that satisfies the conditions of Thm.~\ref{thm_glln} are ER graphs. These graphs have the particularity that the largest eigenvalue grows linearly with the number of nodes whereas the rest of the eigenvalues have a growth rate that does not exceed $\sqrt{N}$. This means that the graph is well-suited for estimation since the PSD of the graph shift average concentrates around the largest eigenvalue corresponding to the mean of the process. This is shown in the proof of the following corollary.

%
\begin{corollary}[WLLN for Erd\H{o}s-R{\'e}nyi graphs] \label{coro_er}
Let $\ccalG_{\ER}$ be an ER graph of size $N$ with edge probability $p_{\ER}$ such that $Np_{\ER} \to \beta \ge 1$. Then, for any node $k \in \{1,\ldots,N\}$ and any $0 < \delta < N-1$ we have that
\begin{equation} \label{eqn_wglln_er}
	\mbP \left( \left| \left[ \hbmu_{N}-\bbmu \right]_{k} \right| > \eps \right) \le \frac{p_{1}}{N\eps^{2}} + o(N^{-\delta}).
\end{equation}
\end{corollary}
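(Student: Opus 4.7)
The plan is to reduce the corollary to a direct application of Theorem~\ref{thm_glln} by verifying its spectral hypotheses for ER graphs with high probability. Given the regime $Np_{\ER}\to\beta\ge 1$, it suffices to produce two facts about the adjacency shift $\bbS$ drawn from $\ccalG_{\ER}$: (i) the leading eigenvalue satisfies $\lambda_1\ge 1$, and (ii) the remaining eigenvalues obey $|\lambda_n|/\lambda_1 = o(N^{-\delta/(2(N-1))})$ for every admissible $\delta$.

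For (i) and (ii) I would invoke the classical spectral concentration results for ER graphs (Füredi--Komlós type bounds, or the sharper Krivelevich--Sudakov estimates). In the regime under consideration, $\lambda_1$ concentrates around the expected degree $Np_{\ER}$, so $\lambda_1\to\beta\ge 1$ with probability tending to one, while $|\lambda_n| = O(\sqrt{Np_{\ER}})$ for $n\ge 2$. The resulting spectral gap gives $|\lambda_n|/\lambda_1 = O(1/\sqrt{Np_{\ER}})$, which is bounded strictly below one on a high-probability event. Because $\delta/(2(N-1))\to 0$ we have $N^{-\delta/(2(N-1))}\to 1$, and hence any ratio that is eventually bounded below unity is automatically $o(N^{-\delta/(2(N-1))})$. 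This delivers the spectral hypothesis of Theorem~\ref{thm_glln}.

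With those hypotheses in place, I would condition on the high-probability event that the spectral gap holds. On that event, Lemma~\ref{l_behavior_qk} gives $q_n = o(N^{-\delta})$ for $n\ge 2$, and the bound chain that led to \eqref{eqn_full_bound_min} and \eqref{eqn_full_bound_max} yields the node-wise inequality $\mbP(|[\hbmu_N-\bbmu]_k|>\eps) \le p_1/(N\eps^2) + o(N^{-\delta})$. The complementary graph event has vanishing probability and can be absorbed into the $o(N^{-\delta})$ remainder, producing \eqref{eqn_wglln_er} unconditionally.

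The main obstacle is the interplay between two sources of randomness: the random graph $\ccalG_{\ER}$ and the stationary signal $\bbx$ supported on it. Theorem~\ref{thm_glln} is phrased deterministically in terms of the spectrum, so I must condition on the high-probability spectral-gap event, apply the theorem, and then uncondition without losing the polynomial rate. A secondary subtlety is that the cleanest forms of the spectral-gap bounds for ER graphs are often stated above the connectivity threshold ($Np_{\ER}\gtrsim\log N$); in the sparsest constant-$\beta$ regime one must appeal to more refined results, and care is required to match the precise concentration statement to the assumption $Np_{\ER}\to\beta\ge 1$.
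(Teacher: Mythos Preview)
Your overall strategy coincides with the paper's: verify that the ER spectrum meets the hypotheses of Theorem~\ref{thm_glln} and then invoke it. But there are two genuine gaps in the execution.

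First, the step ``any ratio that is eventually bounded below unity is automatically $o(N^{-\delta/(2(N-1))})$'' is false. Since $N^{-\delta/(2(N-1))}\to 1$, the statement $a_N=o(N^{-\delta/(2(N-1))})$ is equivalent to $a_N\to 0$; a sequence merely bounded by some $c<1$ does not satisfy this. In the regime $Np_{\ER}\to\beta$ your own estimate $|\lambda_n|/\lambda_1=O(1/\sqrt{Np_{\ER}})=O(1/\sqrt{\beta})$ is a nonzero constant, so you have not verified the spectral hypothesis. The paper instead uses the F\"uredi--Koml\'os semicircle bound in the form $|\lambda_n|\le c\sqrt{N}$ together with $\lambda_1=Np_{\ER}+o(N)$ to obtain a ratio that actually tends to zero, which is what the little-$o$ condition requires.

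Second, Theorem~\ref{thm_glln} only delivers the $p_1/(N\eps^2)$ term at the \emph{minimizing} node; for a generic node $k$ the bound \eqref{eqn_unbiased_error_node} reads $p_1|v_{k,1}|^2+\sum_{n\ge 2}q_n|v_{k,n}|^2$ with no a~priori $1/N$ factor, and the chain \eqref{eqn_full_bound_min}--\eqref{eqn_full_bound_max} does not give a node-wise $p_1/(N\eps^2)$ either. To upgrade to ``any node $k$'' as the corollary asserts, you need the additional fact that the Perron eigenvector of an ER adjacency is delocalized, namely $\sqrt{N}\,v_{k,1}=1+o(1)$ uniformly in $k$ with high probability. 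The paper supplies exactly this via the Mitra result, and that is the missing ingredient in your plan.
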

%
%
\begin{proof}
See appendix.
\end{proof}
%

%
\noindent Corollary \ref{coro_er} states that the estimator obtained at any node is arbitrarily close to the ensemble mean at that node, with a convergence rate that is polynomial on the size of the graph. While ER graphs are perhaps of limited modeling power, they help to illustrate a family of graphs that satisfies the conditions of Theorem~\ref{thm_glln}. More general graphs are addressed in the next section.
\end{myparagraph}

\begin{remark}[Infinite diffusions on a fixed size graph] \normalfont
	\label{rmk_infinite_diffusions}
Alternatively, one could think of considering a graph of fixed size $N$ and unbiased estimator $\hbmu_{L} = (\sum_{\ell=0}^{L-1} \lambda_{1}^{\ell})^{-1} \sum_{\ell=0}^{L-1} \bbS^{\ell} \bbx$ in which the signal is diffused $(L-1)$ times, $L$ independent of $N$. We observe that, if $\lambda_{1} > 1$ and $|\lambda_{n}|<\lambda_{1}$, or if $\lambda_{1}=1$, then $q_{n} \to 0$ as $L \to \infty$ for $n = 2,\ldots,N$, and $q_{1}=p_{1}$. This implies that bound in Lemma~\ref{l_unbiased_error_node} hits a fundamental limit given by $\lim_{L \to \infty} \mbP ( |[\hbmu_{L} - \hbmu|_{k}| > \eps) \leq q_{1}|v_{k,1}|^{2}/\eps^{2}$. Since $|v_{k,1}|^{2}$ depends on the graph topology, and is fixed for fixed $N$, then the bound cannot be shown to decrease any further, even if the signal is diffused an infinite number of times. The latter situation is typically known as the \emph{consensus problem} and we note that the result obtained in Thm.~\ref{thm_glln} is fundamentally different since it deals with convergence of the graph shift average \eqref{eqn_graph_shift_unbiased} as the size of the graph gets larger.
\end{remark}

\begin{remark}[Power method] \normalfont
	\label{rmk_power_method}
The graph shift average in \eqref{eqn_graph_shift_unbiased} has a vague similarity with the power method, which is used to compute the eigenvector associated to the largest eigenvalue of a matrix \cite[Sec.~10.3]{larson17}. This vague similarity notwithstanding, we note that the objective of the graph shift average is to estimate the mean of a WSS graph process on a given graph which is fundamentally different from the problem of estimating the associated eigenvector. It is true that the power method can be used to recover the eigenvalue associated with the largest eigenvector, but this would be a different way of estimating the mean of the graph stationary process. Among many other differences the convergence rate of the power method is the eigenvalue ratio $\lambda_{2}/\lambda_{1}=o(N^{-\delta/2(N-1)})$ whereas the convergence rate for the graph shift average is the much faster rate $o(N^{-\delta})$. Analogous comments apply to the inverse iteration method which uses the graph shift operator inverse to compute the eigenvector $\bbv_{1}$ iteratively.
\end{remark}


\section{Optimal Mean Estimation with Graph Filters} \label{sec_optimal}

In cases where the graph spectrum does not fall under the conditions of Lemma~\ref{l_behavior_qk} and thus Thm.~\ref{thm_glln} cannot be applied, we analyze the convergence of the graph shift average \eqref{eqn_graph_shift_unbiased} in the following lemma.

%
\begin{lemma}[Non-convergent graphs]
	\label{l_nonconvergent}
Let $\ccalG=(\ccalV,\ccalE,\ccalW)$ be a weighted $N$-node graph that admits a normal graph shift operator $\bbS= \bbV \bbLambda \bbV^{\Hr}$. Let $\lambda_{1} \in \reals$ be the largest positive eigenvalue such that $|\lambda_{n}| \le \lambda_{1}$, $\lambda_{n} \ne \lambda_{1}$ for all $n=2,\ldots,N$. Let $\ccalM$ be the set of indices $m$ such that $|\lambda_{m}|/\lambda_{1}$ does not satisfy $o(N^{-\delta/2(N-1)})$ for any $\delta>0$. If $\lambda_{1}>1$ and $\ccalM$ is nonempty, or if $\lambda_{1}<1$, then for any node $k \in \{1,\ldots,N\}$ it holds that
\begin{align}
\mbP \left( \left| [\hbmu_{N}-\bbmu]_{k} \right| > \epsilon \right) 
	& \le \frac{p_{1}}{\epsilon^{2}} |v_{k,1}|^{2}+o(1)
		\label{eqn_nonconvergent} \\
	& + \sum_{m \in \ccalM} p_{m} \left| \frac{1-\lambda_{1}}{1-\lambda_{m}} \right|^{2} |v_{k,m}|^{2} (1+o(1)).
		\nonumber
\end{align}
If $\lambda_{1}<1$, then $\ccalM=\{2,\ldots,N\}$.
\end{lemma}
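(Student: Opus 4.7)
The plan is to start from the Chebyshev-type bound of Lemma~\ref{l_unbiased_error_node},
\begin{equation}
\mbP\left( \left|[\hbmu_{N} - \bbmu]_{k}\right| > \eps \right) \le \frac{1}{\eps^{2}}\sum_{n=1}^{N} q_{n} |v_{k,n}|^{2},
\end{equation}
and to split the right-hand sum into three disjoint groups of indices: the DC term $n=1$; the well-behaved indices $n \notin \ccalM \cup \{1\}$; and the problematic indices $n \in \ccalM$. The DC term contributes $(p_{1}/\eps^{2})|v_{k,1}|^{2}$ because $q_{1} = p_{1}$ by Proposition~\ref{prop_psd_graph_shift_average}. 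For each $n \notin \ccalM \cup \{1\}$ the ratio $|\lambda_{n}|/\lambda_{1}$ is, by definition of $\ccalM$, of the form $o(N^{-\delta/2(N-1)})$ for some $\delta>0$; hence Lemma~\ref{l_behavior_qk} applies and yields $q_{n}=o(N^{-\delta})$. Combined with $\sum_{n} |v_{k,n}|^{2} \le 1$ (rows of the unitary matrix $\bbV$ are orthonormal), these indices produce the $o(1)$ term in \eqref{eqn_nonconvergent}.

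The substance of the argument is the bound on $q_{n}$ for $n \in \ccalM$. Starting from the closed form obtained by the geometric sum,
\begin{equation}
q_{n} = p_{n}\, \bigg|\frac{1-\lambda_{1}}{1-\lambda_{n}}\bigg|^{2} \bigg|\frac{1-\lambda_{n}^{N}}{1-\lambda_{1}^{N}}\bigg|^{2},
\end{equation}
the task reduces to showing that $|1-\lambda_{n}^{N}|^{2}/|1-\lambda_{1}^{N}|^{2} \le 1 + o(1)$. When $\lambda_{1}>1$, I would factor $\lambda_{1}^{2N}$ out of the denominator, $|1-\lambda_{1}^{N}|^{2}=\lambda_{1}^{2N}(1-\lambda_{1}^{-N})^{2}\sim \lambda_{1}^{2N}$, and bound the numerator by $(1+|\lambda_{n}|^{N})^{2} \le (1+\lambda_{1}^{N})^{2} \sim \lambda_{1}^{2N}$, so that the quotient is at most $1+o(1)$. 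When $\lambda_{1}<1$, both $\lambda_{1}^{N}$ and $\lambda_{n}^{N}$ vanish, so the geometric sums converge to $1/(1-\lambda_{1})$ and $1/(1-\lambda_{n})$ and the same $1+o(1)$ factor emerges directly. Putting the three blocks together produces \eqref{eqn_nonconvergent}.

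The remaining claim that $\ccalM = \{2,\ldots,N\}$ whenever $\lambda_{1}<1$ follows from the observation that $N^{-\delta/2(N-1)}\to 1$ for every $\delta>0$: the relation $|\lambda_{n}|/\lambda_{1} = o(N^{-\delta/2(N-1)})$ would then force the ratio to vanish, which is excluded in the subunit regime since $\lambda_{n}$ and $\lambda_{1}$ are fixed nonzero spectral numbers of the graph at hand. The main obstacle I expect is the $\lambda_{1}>1$ case with nontrivial phases: when several eigenvalues in $\ccalM$ share magnitude $\lambda_{1}$ but have distinct arguments, the numerator $|1-\lambda_{n}^{N}|^{2}=1-2\lambda_{1}^{N}\cos(N\arg\lambda_{n})+\lambda_{1}^{2N}$ oscillates with $N$, so some care is needed to argue that the $1+o(1)$ remainder in the ratio holds uniformly in $n\in\ccalM$ rather than only pointwise.
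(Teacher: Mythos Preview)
Your proposal is correct and follows essentially the same route as the paper: start from the Chebyshev bound of Lemma~\ref{l_unbiased_error_node}, separate the DC term, invoke Lemma~\ref{l_behavior_qk} on the well-behaved indices to absorb them into the $o(1)$ term, and for $n\in\ccalM$ write $q_{n}$ via the geometric sum and show that $|1-\lambda_{n}^{N}|^{2}/|1-\lambda_{1}^{N}|^{2}=1+o(1)$. The paper carries out the last step by writing $\lambda_{m}=Re^{\jj\theta}$ and expanding the numerator as $1-2R^{N}\cos(N\theta)+R^{2N}$ directly, whereas you bound it by $(1+|\lambda_{n}|^{N})^{2}\le(1+\lambda_{1}^{N})^{2}$; both arguments give the same $1+o(1)$ factor. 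Your additional remark about uniformity over $\ccalM$ when several eigenvalues share the magnitude $\lambda_{1}$ is a legitimate subtlety that the paper does not address explicitly, but it does not change the structure of the argument.
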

%
\begin{proof}
See appendix.
\end{proof}

%
For the case in which $\lambda_{1}<1$ we reach a fundamental limit under which is not possible to achieve a better estimation. This situation occurs because on each successive step of the diffusion process, the information harnessed from neighboring nodes is less and less (because all the eigenvalues are less than $1$), eventually making it impossible to accurately estimate the mean. Alternatively, when $\lambda_{1}>1$, if $|\ccalM|=o(N)$ then the graph shift average \eqref{eqn_graph_shift_unbiased} is still consistent since at most finitely many values of $q_{m}$ do not follow the rate $o(N^{-\delta})$. When $|\ccalM|=\ccalO(N)$, then no assertions about the convergence rate of the graph shift average can be done, and there is a fundamental constant approximation. In what follows, we address this issue, extending convergence for more general graphs.

Convergence rates can be tuned by the use of graph filters. Linear shift-invariant (LSI) graph filters are linear transformations that can be applied to the signal in a local fashion, operating only with the values of the signal at the neighborhood of each node \cite{segarra17-linear}. More precisely, let $\{h_{\ell}\}_{\ell=0}^{L-1}$ be a set of $L$ filter taps, then a LSI graph filter is the $N \times N$ matrix $\bbH=\sum_{\ell=0}^{L-1} h_{\ell} \bbS^{\ell}$, where $\bbS^{0}=\bbI_{N}$. Note that the output signal
\begin{equation}
	\bby = \bbH \bbx = \sum_{\ell=0}^{L-1} h_{\ell} (\bbS^{\ell}\bbx)
		\label{eqn_lsigf}
\end{equation}
can be computed by accessing the values on the nodes in the $(L-1)$-hop neighborhood at most. For each hop $\ell$, the resulting shifted value on the node is further weighted by filter tap $h_{\ell}$, $\ell=0,\ldots,L-1$. The effect of LSI graph filters on the signal can also be analyzed by projecting the output $\bby$ on the frequency basis, $\tby = \bbV^{\Hr} \bby$. In order to do this, first define the GFT of the graph filter $\tbh \in \mbC^{N}$ as follows \cite{sandryhaila14}
\begin{equation}
	\tbh = \bbPsi \bbh
		= \begin{bmatrix}
			1      & \lambda_{1} & \lambda_{1}^{2} & \cdots & \lambda_{1}^{L-1} \\ 
			\vdots & \vdots      & \vdots          & \ddots & \vdots \\
			1      & \lambda_{N} & \lambda_{N}^{2} & \cdots & \lambda_{N}^{L-1}
		\end{bmatrix} \cdot \begin{bmatrix}
			h_{0} \\ h_{1} \\ h_{2} \\ \vdots \\ h_{L-1}
		\end{bmatrix}.
		\label{eqn_GFT_filter}
\end{equation}
Matrix $\bbPsi \in \mbC^{N \times L}$ is a Vandermonde matrix that acts as the linear transformation that computes the $N$ frequency coefficients $\tbh \in \mbC^{N}$ of the graph filter from the filter taps given in $\bbh \in \reals^{L}$. Observe that, unlike temporal signals and filters, the GFT of graph signals and graph filters are computed differently. The former depending on the eigenvectors of the graph shift operator, whereas the latter depends only on the eigenvalues. By denoting $\circ$ as the elementwise (Hadamard) product of two vectors, we can then obtain the frequency coefficients of the output signal directly from the frequency coefficients of the filter and the signal as follows
\begin{equation}
	\tby = \diag(\tbh) \tbx 
	     = \tbh \circ \tbx.
	     	\label{eqn_convtheorem}
\end{equation}
Note that \eqref{eqn_convtheorem} is analogous to the convolution theorem for temporal signals \cite[Sec.~2.9.6]{oppenheimschafer10}. Graph filters are useful in shaping graph signals and their frequency coefficients by means of local linear operations only.

As shown in Lemma~\ref{l_nonconvergent} the effect of the graph frequencies $q_{n}$ on $\hbmu_{N}$ determine its convergence. Therefore, by carefully designing filter taps $\{h_{\ell}\}_{\ell=0}^{L-1}$ we can obtain a desired graph signal with specific frequency characteristics that can aid in the convergence of the estimator. More precisely, we propose an optimal design of a LSI graph filter that can be applied to the single realization of the graph signal that, not only improves the performance (by minimizing the mean squared error and the volume of the ellipsoid error) but that is also shown to converge for any graph, see Theorems~\ref{thm_optimal_MSE} and \ref{thm_optimal_logdet}. First, we need to restrict the possible filter taps to yield an unbiased estimator.

%
\begin{proposition}
	\label{prop_optimal_unbiased}
Let $\bbx$ be a WSS graph signal on a $N$-node graph $\ccalG$ described by a normal graph shift operator $\bbS=\bbV \bbLambda \bbV^{\Hr}$. Let $\mbE[\bbx]=\bbmu=\mu \bbv_{1}$ where $\bbv_{1}$ is the eigenvector associated to $\lambda_{1}$, the largest, positive eigenvalue such that $\lambda_{n} \ne \lambda_{1}$, $|\lambda_{n}| \le \lambda_{1}$ for all $n=2,\ldots,N$. Let $\{h_{\ell}\}_{\ell=0}^{N-1}$ be a set of $N$ tap filters, where at least one is nonzero. Let $\bby_{N}$ be the output of processing $\bbx$ through the graph filter with taps given by $\bbh=[h_{0},\ldots,h_{N-1}]$. That is, $\bby_{N}=\sum_{\ell=0}^{N-1} h_{\ell}\bbS^{\ell} \bbx$. Then, the estimator
\begin{equation}\label{eqn_optimal_unbiased}
\bbz_{N} 
	= \frac{1}{\sum_{\ell=0}^{N-1} h_{\ell} \lambda_{1}^{\ell}} \bby_{N} 
	= \frac{1}{\sum_{\ell=0}^{N-1} h_{\ell} \lambda_{1}^{\ell}} \sum_{\ell=0}^{N-1} h_{\ell}\bbS^{\ell} \bbx
\end{equation}
is unbiased for any choice of $\{h_{\ell}\}_{\ell=0}^{N-1}$.
\end{proposition}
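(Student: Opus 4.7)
The proof follows the same template as Proposition 1, with the filter taps $h_\ell$ appearing inside the sum but playing no essential role beyond weighting. The key point is that every term $\bbS^\ell \bbx$ has expectation that lies along $\bbv_1$, so any linear combination of such terms also lies along $\bbv_1$, and the scalar normalization is chosen precisely to make the coefficient equal to $\mu$.

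Concretely, I would first invoke linearity of the expectation operator on the definition \eqref{eqn_optimal_unbiased} to write
\begin{equation*}
\mbE[\bbz_N] = \frac{1}{\sum_{\ell=0}^{N-1} h_\ell \lambda_1^\ell} \sum_{\ell=0}^{N-1} h_\ell \bbS^\ell \mbE[\bbx].
\end{equation*}
Next, because $\bbx$ is WSS on $\bbS$ the hypothesis $\mbE[\bbx] = \mu \bbv_1$ applies, and since $\bbv_1$ is the eigenvector of $\bbS$ associated to $\lambda_1$ we have $\bbS^\ell \bbv_1 = \lambda_1^\ell \bbv_1$ for every $\ell \geq 0$. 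Substituting and pulling the scalar $\mu$ and the vector $\bbv_1$ out of the sum yields
\begin{equation*}
\mbE[\bbz_N] = \frac{\mu \sum_{\ell=0}^{N-1} h_\ell \lambda_1^\ell}{\sum_{\ell=0}^{N-1} h_\ell \lambda_1^\ell}\, \bbv_1 = \mu \bbv_1 = \bbmu,
\end{equation*}
which is exactly the unbiasedness claim.

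There is essentially no obstacle: the only delicate point is that the normalizing denominator must be nonzero, i.e. $\sum_{\ell=0}^{N-1} h_\ell \lambda_1^\ell \neq 0$, and I would flag this as an implicit requirement for the estimator \eqref{eqn_optimal_unbiased} to be well defined (the hypothesis that at least one $h_\ell$ is nonzero together with $\lambda_1>0$ is not quite sufficient in full generality, but this is a nondegeneracy condition on $\bbh$ rather than a substantive step in the argument). Beyond this, the proof is a one-line consequence of Property~(i) of Def.~\ref{def_stationarity} and the fact that $\bbv_1$ is an eigenvector, so the entire argument mirrors the proof of Proposition~\ref{prop_unbiased} with $h_\ell$ replacing the uniform weights.
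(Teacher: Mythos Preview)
Your proof is correct and follows exactly the same approach as the paper: linearity of expectation, then the WSS mean condition $\mbE[\bbx]=\mu\bbv_1$, then the eigenvector relation $\bbS^\ell\bbv_1=\lambda_1^\ell\bbv_1$, and finally cancellation of the normalizing sum. Your remark about the implicit nondegeneracy requirement $\sum_\ell h_\ell\lambda_1^\ell\neq 0$ is a valid observation that the paper glosses over.
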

%
%
\begin{proof}
See appendix.
\end{proof}
%

\begin{figure*}
    \captionsetup[subfigure]{justification=centering}
    \centering
    \begin{subfigure}{0.66\columnwidth}
        \centering
        \includegraphics[width=0.99\textwidth]{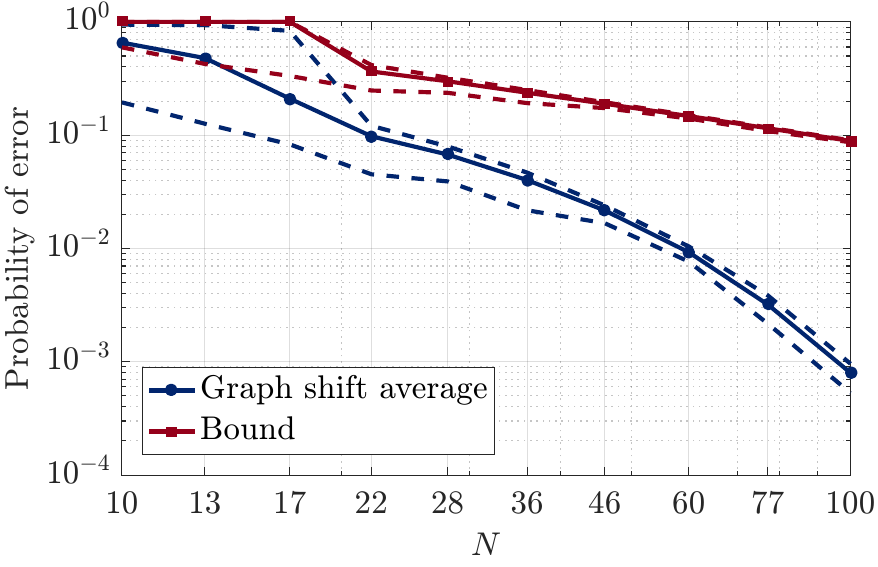}
        \caption{Graph shift average $\hbmu_{N}$ \eqref{eqn_graph_shift_unbiased}}
        \label{p_error_hbmu-ER}
    \end{subfigure}
    \hfill
    \begin{subfigure}{0.66\columnwidth}
        \centering
        \includegraphics[width=0.95\textwidth]{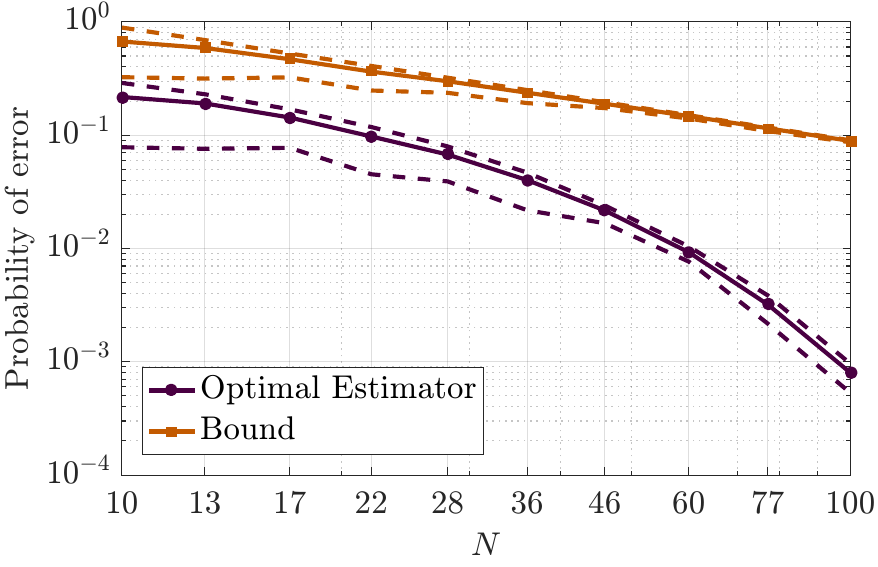}
        \caption{Optimal estimator $\bbz_{N}$ \eqref{eqn_optimal_unbiased}-\eqref{eqn_optimal_MSE}}
        \label{p_error_z-ER}
    \end{subfigure}
    \hfill
    \begin{subfigure}{0.66\columnwidth}
        \centering
        \includegraphics[width=0.95\textwidth]{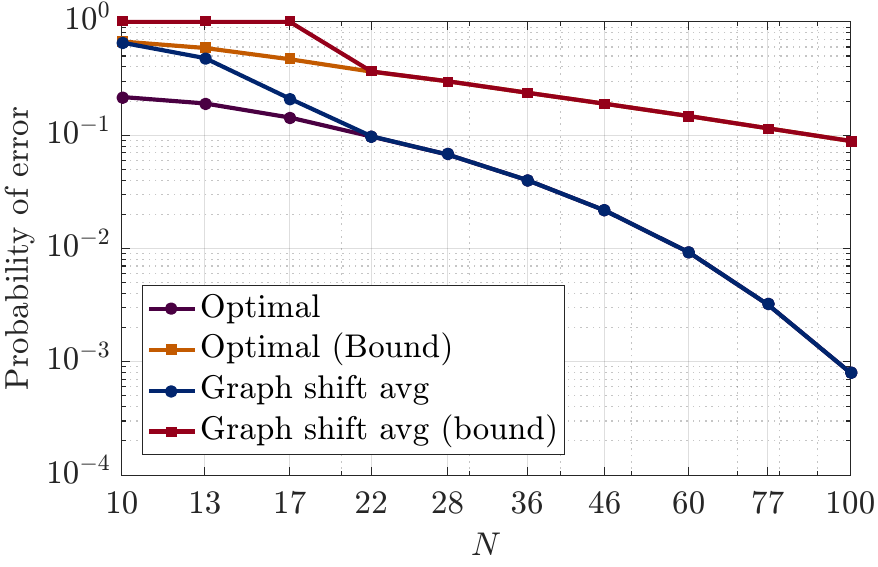}
        \caption{Comparison}
        \label{p_error_comp-ER}
    \end{subfigure}
    \caption{Erd{\H{o}}s-R{\'e}nyi Graph. The full lines correspond to the mean bound and mean probability of error for $50$ graph realizations. The dashed lines correspond to the maximum and minimum values obtained in some realization. \subref{p_error_hbmu-ER}-\subref{p_error_z-ER} Estimated probability of error and theoretical bound for ER graphs of varying size $N$ from $10$ to $100$ for the graph shift average $\hbmu_{N}$ and the optimal estimator $\bbz_{N}$, respectively. \subref{p_error_comp-ER} Comparison of the mean probability of error and mean theoretical bound for both the graph shift average and the optimal estimator.}
    \label{fig_ER}
\end{figure*}

Prop.~\ref{prop_optimal_unbiased} determines how to obtain an unbiased estimator from filtering a WSS graph signal with an arbitrary LSI graph filter and highlights the effect that the spectral properties of the underlying graph support have on the filtered signal. Filtered output $\bbz_{N}$ is itself a WSS graph signal with covariance matrix as specified in the following proposition.

%
\begin{proposition} \label{prop_psd_optimal}
The output $\bbz_{N}$ of LSI graph filter \eqref{eqn_optimal_unbiased} is WSS with respect to the graph shift $\bbS$ [cf. Def.~\ref{def_stationarity}]. The covariance can be written as $\bbC_{z} = \bbV \diag(\bbr)\bbV^{\Hr}$ with the elements of the PSD $\bbr = [r_{1}, \ldots, r_{n}]^{\Tr}$ explicitly given by
\begin{equation}\label{eqn_psd_optimal}
   r_{n} \ = \ p_{n} \ \bigg|\sum_{\ell=0}^{N-1} h_{\ell} \lambda_{n}^{\ell}\bigg|^{2} 
                     \ \Big/ \
                     \bigg|\sum_{\ell=0}^{N-1} h_{\ell} \lambda_{1}^{\ell}\bigg|^{2} 
           = p_{n} \frac{|\tdh_{n}|^{2}}{|\tdh_{1}|^{2}}.
\end{equation}
\end{proposition}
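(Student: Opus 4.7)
The plan is to mirror the proof of Proposition \ref{prop_psd_graph_shift_average} almost verbatim, since the only difference between the graph shift average \eqref{eqn_graph_shift_unbiased} and the filtered estimator \eqref{eqn_optimal_unbiased} is that each power $\bbS^{\ell}$ now carries an arbitrary tap $h_{\ell}$ instead of a constant unit weight, and the normalization $\sum_{\ell} \lambda_{1}^{\ell}$ is replaced by $\sum_{\ell} h_{\ell}\lambda_{1}^{\ell} = \tdh_{1}$ (under the GFT in \eqref{eqn_GFT_filter}). So the structure is: verify (i) from Def.~\ref{def_stationarity}, then verify (ii) and read off the PSD.

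For property (i), there is nothing new to do: Prop.~\ref{prop_optimal_unbiased} already gives $\mbE[\bbz_{N}] = \mu\bbv_{1}$, which is proportional to the eigenvector associated with $\lambda_{1}$, so the mean lies in the required direction.

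For property (ii), I would first obtain a clean expression for the centered estimator by computing
\begin{equation*}
\bbz_{N} - \bbmu = \frac{1}{\sum_{\ell=0}^{N-1} h_{\ell}\lambda_{1}^{\ell}} \sum_{\ell=0}^{N-1} h_{\ell} \bbS^{\ell} \bbx - \bbmu,
\end{equation*}
and showing, in the same manner as in equations \eqref{eqn_second_term}--\eqref{eqn_hatmu_mu} of the proof of Prop.~\ref{prop_psd_graph_shift_average}, that the deterministic piece cancels because $\bbS^{\ell}\bbv_{1} = \lambda_{1}^{\ell}\bbv_{1}$ and $\bbmu = \mu\bbv_{1}$. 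This yields the analogue
\begin{equation*}
\bbz_{N} - \bbmu = \frac{1}{\sum_{\ell=0}^{N-1} h_{\ell}\lambda_{1}^{\ell}} \sum_{\ell=0}^{N-1} h_{\ell} \bbS^{\ell} (\bbx - \bbmu).
\end{equation*}
Taking the outer product, using linearity of expectation, inserting $\bbC_{x} = \bbV\diag(\bbp)\bbV^{\Hr}$ and $\bbS^{\ell} = \bbV\bbLambda^{\ell}\bbV^{\Hr}$, and pulling $\bbV$ and $\bbV^{\Hr}$ outside the sums gives
\begin{equation*}
\bbC_{z} = \frac{1}{\bigl|\sum_{\ell=0}^{N-1} h_{\ell}\lambda_{1}^{\ell}\bigr|^{2}} \, \bbV \Bigl(\sum_{\ell=0}^{N-1} h_{\ell}\bbLambda^{\ell}\Bigr) \diag(\bbp) \Bigl(\sum_{\ell=0}^{N-1} h_{\ell}\bbLambda^{\ell}\Bigr)^{\Hr} \bbV^{\Hr}.
\end{equation*}
Since all matrices inside the central product are diagonal, the bracketed factor is itself diagonal with $n$th entry $p_{n}\bigl|\sum_{\ell} h_{\ell}\lambda_{n}^{\ell}\bigr|^{2}$, which after dividing by $\bigl|\sum_{\ell} h_{\ell}\lambda_{1}^{\ell}\bigr|^{2}$ yields the claimed $r_{n}$ in \eqref{eqn_psd_optimal}. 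The final equality in \eqref{eqn_psd_optimal} follows immediately from the definition of the filter GFT \eqref{eqn_GFT_filter}, which identifies $\sum_{\ell} h_{\ell}\lambda_{n}^{\ell} = \tdh_{n}$.

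I do not anticipate a genuine obstacle: each step is a routine specialization of the proof of Prop.~\ref{prop_psd_graph_shift_average}. The only minor point requiring care is the normalization $\sum_{\ell} h_{\ell}\lambda_{1}^{\ell}$ (guaranteed nonzero because at least one tap is nonzero and because, even for general $\bbh$, unbiasedness in Prop.~\ref{prop_optimal_unbiased} implicitly requires $\tdh_{1}\neq 0$), and the fact that $\tdh_{n}$ here is well-defined with $L=N$ since the Vandermonde expression \eqref{eqn_GFT_filter} still applies.
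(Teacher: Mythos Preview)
Your proposal is correct and will go through without difficulty; it is a direct generalization of the computation in Prop.~\ref{prop_psd_graph_shift_average} with $h_{\ell}$ in place of the unit taps.

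The paper, however, takes a shorter route: rather than redoing the covariance computation from scratch, it observes that the normalized filter in \eqref{eqn_optimal_unbiased} is itself an LSI graph filter with frequency response $\tdh_{n}/\tdh_{1}$ and then invokes a known result (Property~1 in the graph stationarity reference~\cite{marques17}) stating that an LSI filter applied to a WSS input produces a WSS output whose PSD equals the input PSD times the squared magnitude of the filter's frequency response. That immediately gives $r_{n} = p_{n}|\tdh_{n}|^{2}/|\tdh_{1}|^{2}$. Your approach is more self-contained and makes explicit why the covariance diagonalizes in the eigenbasis of $\bbS$; the paper's approach is terser but relies on an external property and on recognizing \eqref{eqn_optimal_unbiased} as a bona fide LSI filter after normalization. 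Either is fine, and your version arguably fits better with the level of detail used elsewhere in the paper (e.g., the proof of Prop.~\ref{prop_psd_graph_shift_average}).
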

%
%
\begin{proof}
See appendix.
\end{proof}

First, it is noted that, if $h_{\ell}=1$ for all $\ell=0,\ldots,N-1$, then \eqref{eqn_psd_optimal} boils down to \eqref{eqn_qk}. Also, it is observed that $r_{1}=p_{1}$, while the rest of the frequencies are rescaled proportional to the amplitude of the specific frequency coefficient $|\tdh_{n}|^{2}$. It is finally noted that any LSI graph filter, when used as an unbiased estimator \eqref{eqn_optimal_unbiased}, leaves the first frequency coefficient unchanged, which is consistent with the notion of the mean being associated with the first frequency coefficient, and also with the intuition of trying to estimate this specific frequency coefficient (and hence, left unmodified).

Now that we have a systematic way to construct unbiased estimators from arbitrary LSI graph filters, we can determine optimality criteria to obtain the best performing estimator. For instance, consider the estimator that minimizes the MSE which is given by
\begin{equation}
\tr \left[ \bbC_{z} \right] = \sum_{n=1}^{N} p_{n} \frac{|\tdh_{n}|^{2}}{|\tdh_{1}|^{2}}.
		\label{eqn_MSE}
\end{equation}
The design that minimizes the MSE is given by $\tdh_{n}=0$ for $n=2,\ldots,N$ as shown in the following theorem.

%
\begin{theorem} \label{thm_optimal_MSE}
Let $\bbx$ be a WSS graph signal on a $N$-node graph $\ccalG$ described by a normal graph shift operator $\bbS = \bbV \bbLambda \bbV^{\Hr}$. Let $\lambda_{1}$ be the largest positive eigenvalue such that $\lambda_{n} \ne \lambda_{1}$, $|\lambda_{n}| \le \lambda_{1}$ for all $n=2,\ldots,N$. Let $\{h_{\ell}\}_{\ell=0}^{N-1}$ be a set of $N$ filter taps and $\{\tdh_{n}\}_{n=1}^{N}$ the frequency coefficients of the filter. Then, the filter that minimizes the MSE \eqref{eqn_MSE} is given by
\begin{equation} \label{eqn_optimal_MSE}
\tdh_{n}=0 \quad , \: n =2,\ldots,N
\end{equation}
and any $\tdh_{1} \ne 0$.
\end{theorem}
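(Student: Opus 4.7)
The plan is to view the MSE $\tr[\bbC_z]$ as a function of the graph-frequency response $\{\tdh_n\}_{n=1}^N$ of the filter, exploit the fact that this expression decouples across frequencies, and then verify realizability in the tap domain.

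First, I would isolate the $n=1$ term in the MSE expression \eqref{eqn_MSE} to write
\begin{equation*}
\tr[\bbC_z] \;=\; p_1 \;+\; \sum_{n=2}^{N} p_n\,\frac{|\tdh_n|^2}{|\tdh_1|^2}.
\end{equation*}
Since each PSD coefficient is a variance of a frequency component (cf. \eqref{eqn_psd}), one has $p_n \geq 0$. Combined with $|\tdh_n|^2/|\tdh_1|^2 \geq 0$ (well-defined since the unbiasedness requirement in Prop.~\ref{prop_optimal_unbiased} forces $\tdh_1 \neq 0$), each summand on the right is nonnegative. Consequently $\tr[\bbC_z] \geq p_1$, with equality if and only if $p_n |\tdh_n|^2 = 0$ for every $n = 2,\ldots,N$. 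In particular, the choice $\tdh_n = 0$ for $n=2,\ldots,N$ with any $\tdh_1 \neq 0$ attains this lower bound and therefore minimizes \eqref{eqn_MSE}, which is the content of \eqref{eqn_optimal_MSE}.

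The remaining point is to verify that this frequency-domain prescription corresponds to an admissible set of taps $\bbh \in \reals^N$ via $\tbh = \bbPsi \bbh$ in \eqref{eqn_GFT_filter}. When the eigenvalues of $\bbS$ are pairwise distinct, $\bbPsi$ is a nonsingular Vandermonde matrix, so $\bbh = \bbPsi^{-1}[\tdh_1,0,\ldots,0]^{\Tr}$ is uniquely defined for each $\tdh_1 \neq 0$. When eigenvalues repeat, rows of $\bbPsi$ indexed by equal eigenvalues coincide, which automatically forces the corresponding $\tdh_n$ to be equal; this is consistent with setting them all to zero, and a valid $\bbh$ can still be obtained from $\bbPsi^{\dagger}$. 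Either way, the prescribed filter exists.

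The main obstacle I anticipate is purely bookkeeping: the optimization is transparent in the frequency variables $\tdh_n$, but the actual design variables are the taps $h_\ell$, and the map between the two is the possibly rank-deficient Vandermonde \eqref{eqn_GFT_filter}. Once this is handled by the Vandermonde/pseudoinverse argument above, the nonnegativity observation closes the proof in essentially one line. A secondary subtlety is that the minimizer is unique only up to (i) the free scalar $\tdh_1 \neq 0$ and (ii) arbitrary values of $\tdh_n$ on frequencies with $p_n = 0$, both of which are explicitly acknowledged in the theorem statement.
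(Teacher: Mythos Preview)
Your argument is correct and is, if anything, cleaner than the paper's. The paper proceeds by calculus: it differentiates $\tr[\bbC_z]$ with respect to each frequency coefficient $\tdh_n$, sets the resulting expressions to zero, and observes that $\tdh_1\neq 0$, $\tdh_n=0$ for $n\geq 2$ solves the first-order conditions. Your route is the direct lower-bound argument $\tr[\bbC_z]=p_1+\sum_{n\geq 2}p_n|\tdh_n|^2/|\tdh_1|^2\geq p_1$, which avoids derivatives entirely and makes optimality (not just stationarity) immediate. You also address something the paper leaves implicit, namely the realizability of the prescribed frequency response by actual taps $\bbh$ through the Vandermonde map $\bbPsi$, including the repeated-eigenvalue case. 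The trade-off is minimal: the paper's calculus approach is mechanical and would generalize more readily to objectives without an obvious pointwise lower bound, while your inequality argument is shorter, self-contained, and gives the exact characterization of the minimizers (including the $p_n=0$ degeneracy you note) without any second-order check.
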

%
%
\begin{proof}
See appendix.
\end{proof}
Thm.~\ref{thm_optimal_MSE} is in agreement with intuition: the optimal design is the one that lets the mean frequency component unfiltered while suppressing the rest of the frequencies (i.e. an ideal low-pass graph filter), yielding an output that contains only the desired value.

Interestingly enough, a similar solution also minimizes the volume of the ellipsoid given by the covariance matrix subject to an energy constraint.

%
\begin{theorem}
	\label{thm_optimal_logdet}
Under the conditions of Thm.~\ref{thm_optimal_MSE}, the frequency coefficients of the filter taps that solve
\begin{align}
\underset{\tdh_{1},\ldots,\tdh_{N}}{\text{minimize }} 
	& \log(\det(\bbC_{z}))
	 	\label{eqn_logdet} \\
\text{subject to }
	& \|\tbh\|_{2}^{2} \le \nu_{\max}
\end{align}
are also given by \eqref{eqn_optimal_MSE} and $\tdh_{1} = \sqrt{\nu_{\max}}$ for some $\nu_{\max} > 0$ that determines the maximum energy in the GFT of the filter taps.
\end{theorem}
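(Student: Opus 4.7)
The plan is to apply Proposition \ref{prop_psd_optimal} to rewrite $\log\det(\bbC_z)$ as an explicit separable function of the squared magnitudes $|\tdh_{1}|^{2},\ldots,|\tdh_{N}|^{2}$, and then solve the resulting constrained minimization by monotonicity. Since $\bbC_z=\bbV\diag(\bbr)\bbV^{\Hr}$ is normal with eigenvalues $r_{n}=p_{n}|\tdh_{n}|^{2}/|\tdh_{1}|^{2}$ and in particular $r_{1}=p_{1}$, one obtains
\begin{equation*}
\log\det(\bbC_z)=\sum_{n=1}^{N}\log p_{n}+\sum_{n=2}^{N}\log|\tdh_{n}|^{2}-(N-1)\log|\tdh_{1}|^{2}.
\end{equation*}
The first sum is a constant independent of the filter taps, so the problem reduces to minimizing $F(|\tdh_{1}|^{2},\ldots,|\tdh_{N}|^{2}):=\sum_{n=2}^{N}\log|\tdh_{n}|^{2}-(N-1)\log|\tdh_{1}|^{2}$ over the feasible set $\{\sum_{n=1}^{N}|\tdh_{n}|^{2}\le\nu_{\max}\}$.

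Next, I would observe that $F$ is strictly decreasing in $|\tdh_{1}|^{2}$ (partial derivative $-(N-1)/|\tdh_{1}|^{2}$) and strictly increasing in each $|\tdh_{n}|^{2}$ for $n\geq 2$ (partial derivative $1/|\tdh_{n}|^{2}$). Given any feasible candidate for which $|\tdh_{n}|^{2}>0$ for some $n\geq 2$, transferring that mass to $|\tdh_{1}|^{2}$ preserves feasibility and strictly decreases $F$. Iterating this transfer forces $|\tdh_{n}|^{2}=0$ for all $n\geq 2$ and $|\tdh_{1}|^{2}=\nu_{\max}$ at the optimum, yielding $\tdh_{n}=0$ for $n=2,\ldots,N$ and $\tdh_{1}=\sqrt{\nu_{\max}}$ up to a unit-modulus phase that is absorbed into the normalization of $\bbz_{N}$. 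This recovers precisely the ideal low-pass design of Thm.~\ref{thm_optimal_MSE}.

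The main obstacle is the degeneracy at the claimed optimum: plugging $\tdh_{n}=0$ for $n\geq 2$ into $\bbC_z$ makes it singular, so $\log\det(\bbC_z)=-\infty$. I would treat this as a feature rather than a bug: geometrically, the error ellipsoid collapses onto the line spanned by $\bbv_{1}$ with residual variance $p_{1}$, and the volume that $\log\det$ measures is zero, the smallest possible value. To make the assertion $\tdh_{1}=\sqrt{\nu_{\max}}$ (rather than some smaller value of $|\tdh_1|^2$) precise, one can examine paths of the form $|\tdh_{1}|^{2}=\nu_{\max}-(N-1)\epsilon$, $|\tdh_{n}|^{2}=\epsilon$ for $n\geq 2$, and note that along these paths $F\to-\infty$ as $\epsilon\downarrow 0$; any candidate with $|\tdh_{1}|^{2}<\nu_{\max}$ leaves budget unused and can be strictly improved by reallocating it to $|\tdh_{1}|^{2}$. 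Therefore the infimum is attained exactly at the filter claimed by the theorem.
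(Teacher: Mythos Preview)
Your proposal is correct and follows essentially the same route as the paper: both expand $\log\det(\bbC_{z})=\sum_{n}\log r_{n}$ using the eigenvalues $r_{n}=p_{n}|\tdh_{n}|^{2}/|\tdh_{1}|^{2}$ from Proposition~\ref{prop_psd_optimal}, obtain a separable expression in the $|\tdh_{n}|$, and conclude by monotonicity that each $|\tdh_{n}|$ for $n\geq 2$ should be driven to zero while $|\tdh_{1}|$ should be pushed to its maximum allowed value $\sqrt{\nu_{\max}}$. Your mass-transfer phrasing and your explicit treatment of the singularity at the optimum (where $\log\det(\bbC_{z})=-\infty$) are a bit more careful than the paper's argument, which simply asserts that $\log|\tdh_{n}|$ is minimized at $\tdh_{n}=0$, but the underlying idea is identical.
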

%
\begin{proof}
See appendix.
\end{proof}
\noindent Theorem \ref{thm_optimal_logdet} shows that graph filter \eqref{eqn_optimal_unbiased} is also optimal in the sense that it minimizes the volume of the error ellipsoid (also known as D-optimality criteria in optimal design of experiments \cite{Pukelsheim93-ExperimentDesign}).

\begin{remark}[Optimal estimators] \normalfont
    \label{rmk_optimal}
We observe that the estimator \eqref{eqn_optimal_unbiased} that minimizes the MSE is equal to $\bbv_{1}^{\Hr} \bbx$ (see Theorem~\ref{thm_optimal_MSE}) which is the optimal estimator across all possible linear operators. This implies that the optimal linear shift-invariant graph filter \eqref{eqn_optimal_unbiased} with coefficients \eqref{eqn_optimal_MSE} is not only the optimal estimator among the class of linear shift-invariant graph filters, but also the optimal estimator among the class of all linear operators. We note that using the optimal estimator in the form \eqref{eqn_optimal_unbiased} with coefficients \eqref{eqn_optimal_MSE} offers several advantages. Namely, it can be computed in a decentralized fashion with access to only one node with communication capabilities, whereas using $\bbv_{1}^{\Hr} \bbx$ demands centralized computing since it requires knowledge of the value of the eingenvector $\bbv_{1}$ and the signal $\bbx$ at every node. Additionally, the computational cost of the eigendecomposition is $\ccalO(N^{3})$ whereas \eqref{eqn_optimal_unbiased} demands $\ccalO(N^{2})$ when computed at a single node. In essence, using a linear shift-invariant graph filter \eqref{eqn_optimal_unbiased} with coefficients \eqref{eqn_optimal_MSE} not only guarantees optimality among all linear operators, but also favors a decentralized solution that exploits the sparse and efficient implementation of linear shift-invariant graph filters.
\end{remark}


\begin{figure*}
    \captionsetup[subfigure]{justification=centering}
    \centering
    \begin{subfigure}{0.66\columnwidth}
        \centering
        \includegraphics[width=0.99\textwidth]{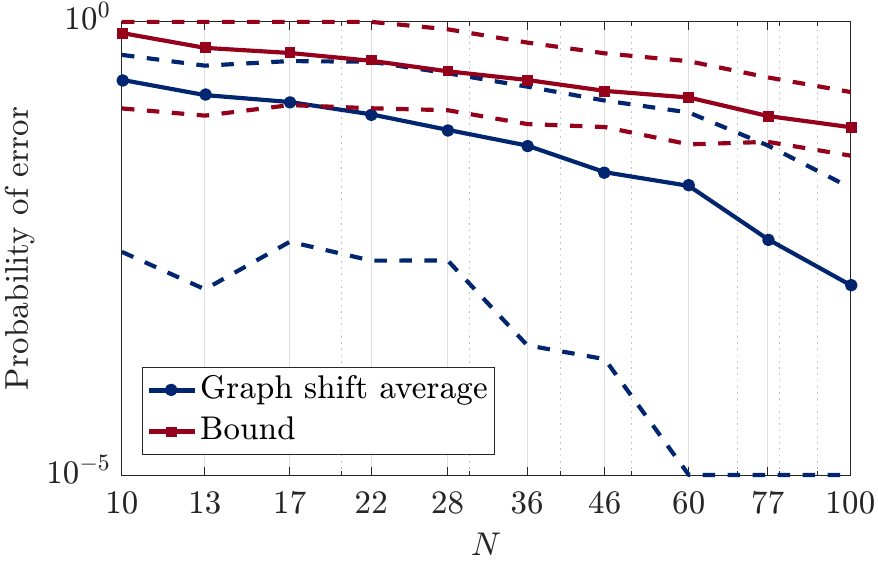}
        \caption{Graph shift average $\hbmu_{N}$ \eqref{eqn_graph_shift_unbiased}}
        \label{p_error_hbmu-COV}
    \end{subfigure}
    \hfill
    \begin{subfigure}{0.66\columnwidth}
        \centering
        \includegraphics[width=0.99\textwidth]{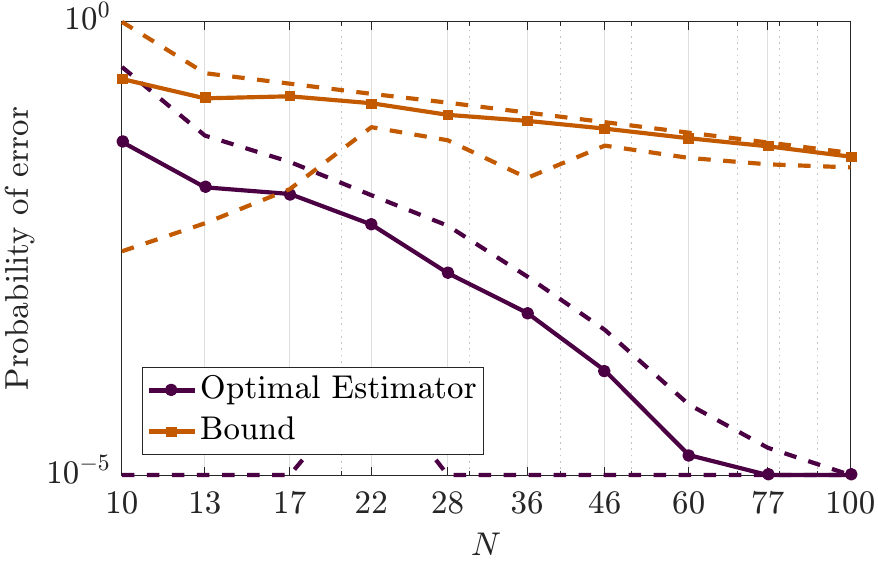}
        \caption{Optimal estimator $\bbz_{N}$ \eqref{eqn_optimal_unbiased}-\eqref{eqn_optimal_MSE}}
        \label{p_error_z-COV}
    \end{subfigure}
    \hfill
    \begin{subfigure}{0.66\columnwidth}
        \centering
        \includegraphics[width=0.99\textwidth]{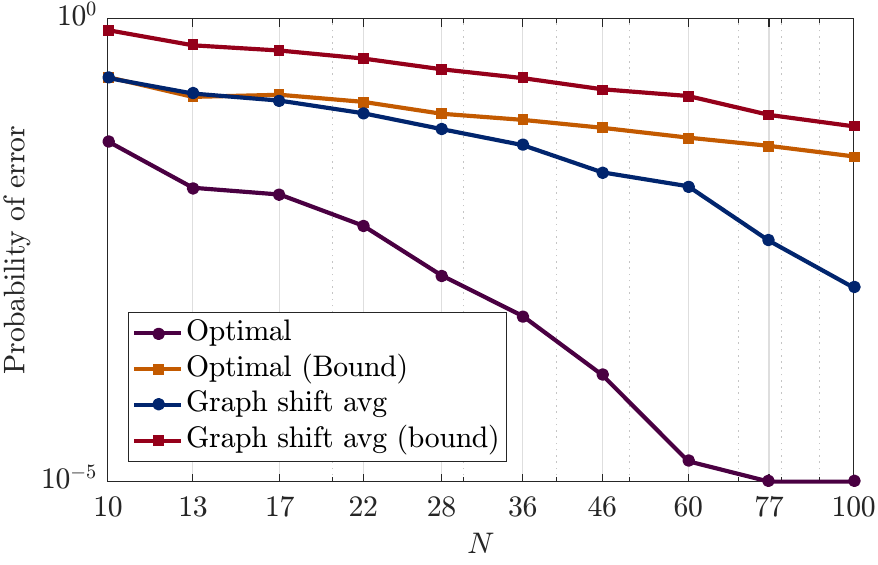}
        \caption{Comparison}
        \label{p_error_comp-COV}
    \end{subfigure}
    \caption{Covariance Graph. The full lines correspond to the mean bound and mean probability of error for $50$ graph realizations. The dashed lines correspond to the maximum and minimum values obtained in some realization. \subref{p_error_hbmu-COV}-\subref{p_error_z-COV} Estimated probability of error and theoretical bound for covariance graphs of varying size $N$ from $10$ to $100$ for the graph shift average and the optimal estimator, respectively. \subref{p_error_comp-COV} Comparison of the mean probability of error and mean theoretical bound for both the graph shift average and the optimal estimator.}
    \label{fig_COV}
\end{figure*}

\section{Numerical Experiments} \label{sec_sims}

In this section we consider numerical experiments to illustrate the effect of the graph shift average \eqref{eqn_graph_shift_unbiased}, the error bound \eqref{eqn_unbiased_error_node} and the optimal estimator \eqref{eqn_optimal_unbiased}-\eqref{eqn_optimal_MSE}. In Sections~\ref{subsec_ER}, \ref{subsec_cov} and \ref{subsec_SBM}, we use these estimators in the context of Erd{\H{o}}s-R{\'e}nyi graphs, covariance graphs and stochastic block models, respectively. In Section~\ref{subsec_GMRF} we deploy this estimator in the context of distributed estimation in sensor networks in which we want to estimate the mean of a Gaussian-Markov Random Field (GMRF). In this last section we compare the performance of the unbiased diffusion estimator with the distributed LMS estimator proposed in \cite{dilorenzo14}.

Unless otherwise specified, we consider an $N$-node graph $\ccalG$ described by a graph shift operator given by the adjacency matrix $\bbS=\bbA=\bbV \bbLambda \bbV^{\Hr} \in \reals^{N \times N}$ which is normal since the adjacency is symmetric (undirected graph). We consider a single realization $\bbx$ of a WSS graph signal with mean $\bbmu=\mu \bbv_{1}$, proportional to the eigenvector associated to the largest eigenvalue. The covariance matrix is determined by the PSD given by  $\bbp \in \reals^{N}$. We define the signal-to-noise ratio as $\textrm{SNR} = 10 \log_{10} (\mu^{2}/p_{1})$ (in dB). For each graph size $N$, $50$ different graphs are generated. For each one of these graphs, $10^{5}$ different signal realizations are simulated and the error probability for some $\epsilon$ is estimated from these, see \eqref{eqn_unbiased_error_node}. Additionally, for each graph realization, we aggregate the results on a node $k$ determined by the node with the largest $v_{1,k}$ such that $v_{1,k}<1/\sqrt{N}$. Results presented include those obtained through averaging across all graphs realizations (full lines), as well as the maximum and minimum results (dashed lines).


\subsection{Erd{\H{o}}s-R{\'e}nyi Graphs} \label{subsec_ER}

In this first example, we consider an $N$-node Erd{\H{o}}s-R{\'e}nyi (ER) graph where each edge is drawn with probability $p_{\ER}=0.2$ independently of all other edges \cite{erdos59}. Only realizations of this graph that are connected are considered. We set $\mu=3$ and $\textrm{SNR}=10 \ \textrm{dB}$. For computing the bound \eqref{eqn_unbiased_error_node} we consider $\epsilon=0.1 \cdot 10^{\textrm{SNR}/10}$. The PSD is given by $p_{n}=p_{1} \cdot 10^{3(n-1)/(N-1)+1}$, i.e. $\bbp$ consists of $n$ logarithmically spaced points between $10 p_{1}$ and $10^{4} p_{1}$.

In the experiment, we vary $N$ from  $10$ to $100$ and also simulate the optimal estimator \eqref{eqn_optimal_unbiased}. In Figs.~\ref{p_error_hbmu-ER} and \ref{p_error_z-ER} we show the error probability and the bound as a function of $N$ for the graph shift average and optimal estimator, respectively. We observe that both decrease as the size of the graph grows larger, as expected from the convergence of Thm.~\ref{thm_glln}. In Fig.~\ref{p_error_comp-ER} we compare the results between both estimators. Note that, because the ER graphs satisfy Thm.~\ref{thm_glln} (see Cor.~\ref{coro_er}), then the graph shift average converges, indeed, to the optimal estimator.

\begin{figure*}
	\captionsetup[subfigure]{justification=centering}
	\centering
	\begin{subfigure}{0.66\columnwidth}
		\centering
		\includegraphics[width=0.99\textwidth]{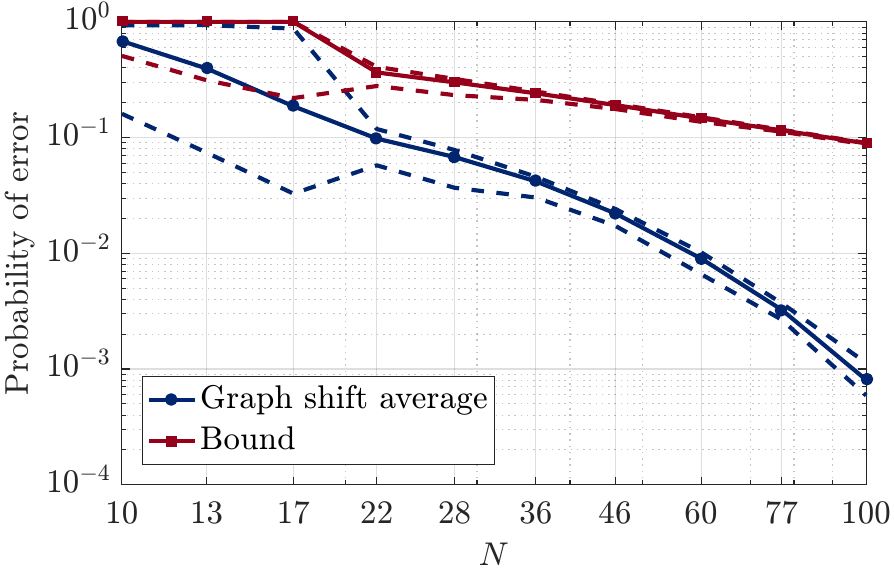}
		\caption{Graph shift average $\hbmu_{N}$ \eqref{eqn_graph_shift_unbiased}}
		\label{p_error_hbmu-SBM}
	\end{subfigure}
	\hfill
	\begin{subfigure}{0.66\columnwidth}
		\centering
		\includegraphics[width=0.99\textwidth]{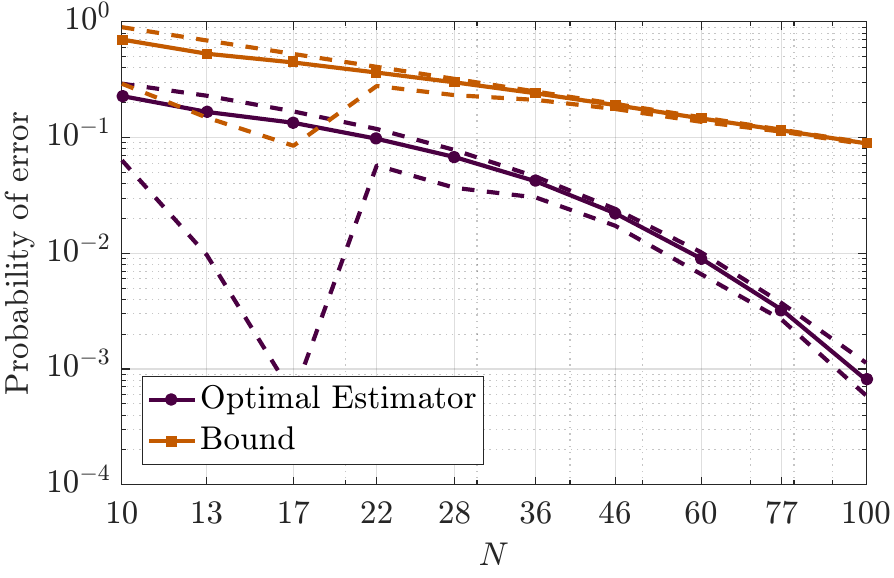}
		\caption{Optimal estimator $\bbz_{N}$ \eqref{eqn_optimal_unbiased}-\eqref{eqn_optimal_MSE}}
		\label{p_error_z-SBM}
	\end{subfigure}
	\hfill
	\begin{subfigure}{0.66\columnwidth}
		\centering
		\includegraphics[width=0.99\textwidth]{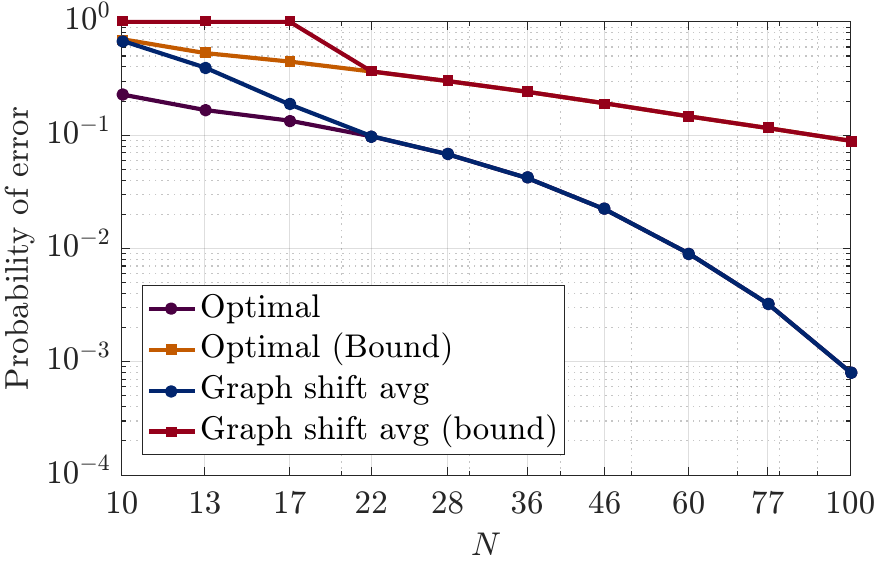}
		\caption{Comparison}
		\label{p_error_comp-SBM}
	\end{subfigure}
	\caption{Stochastic block model. The full lines correspond to the mean bound and mean probability of error for $50$ graph realizations. The dashed lines correspond to the maximum and minimum values obtained in some realization. \subref{p_error_hbmu-SBM}-\subref{p_error_z-SBM} Estimated probability of error and theoretical bound for stochastic block models of varying size $N$ from $10$ to $100$ for the graph shift average and the optimal estimator, respectively. \subref{p_error_comp-SBM} Comparison of the mean probability of error and mean theoretical bound for both the graph shift average and the optimal estimator.}
	\label{fig_SBM}
\end{figure*}


\subsection{Covariance Graphs} \label{subsec_cov}

As a second example, we consider covariance graphs of size $N$. That is, we create a covariance matrix $\bbSigma$ at random of size  $N \times N$, then we generate $10^{5}$ training samples of a zero-mean Gaussian random vector with covariance matrix given by $\bbSigma$. We use these training samples to estimate the covariance matrix and set this estimate $\hbSigma$ as the graph shift operator $\bbS=\hbSigma$. Then, we generate a WSS graph signal over this graph with mean given by $\bbmu = 3 \ \bbv_{1}$ and PSD given by $\bbp=p_{1} \ \bbone$.

For the simulation, we vary $N$ from $10$ to $100$. Estimated error probabilities and bounds can be found in Figs.~\ref{p_error_hbmu-COV} and \ref{p_error_z-COV}. It is observed that both the bound and the estimated error probability decrease with $N$. In Fig.~\ref{p_error_comp-COV} we observe the comparison between the graph shift average and the optimal estimator. Given that a covariance graph does not necessarily satisfy the conditions on Thm.~\ref{thm_glln}, we observe that, while the graph shift average still has decreasing error probability, the optimal estimator does have a better performance having up to $3$ orders of magnitude less error for $N=60$.


\subsection{Stochastic block models} \label{subsec_SBM}

In the third example, we study the performance of the proposed estimators for a stochastic block model (SBM) \cite{decelle11}. A stochastic block model of $N$ nodes with $C$ communities $\{\ccalC_{\alpha},\alpha=1,\ldots,C\}$, $\ccalC_{\alpha} \cap \ccalC_{\beta}= \emptyset$, $\alpha \ne \beta$ and $\cup_{\alpha=1}^{C} \ccalC_{\alpha} = \ccalV$ is constructed in such a way that edges within the same community $\ccalC_{\alpha}$ are drawn independently with probability $p_{\alpha}$ and edges between nodes belonging to different communities $\ccalC_{\alpha}$ and $\ccalC_{\beta}$, $\alpha \ne \beta$, are drawn independently with $p_{\alpha,\beta}$. We consider $C=4$, $p_{\alpha}=0.6$ and $p_{\alpha,\beta}=0.1$ for all $\alpha,\beta=1,\ldots,C$, $\alpha \ne \beta$. For the situations in which $N$ is not divisible by $C$, we add the remainder of the nodes to the communities, one in each, until there are $N$ nodes in the graph. The WSS graph signal is considered to have a PSD given by $p_{n}=p_{1} \cdot 10^{3(n-1)/(N-1)+1}$.

We run simulations for varying graphs sizes $N$ from $10$ to $100$. Figs.~\ref{p_error_hbmu-SBM} and \ref{p_error_z-SBM} show the estimated probability of error and the theoretical bound as a function of $N$. Note that both decrease as $N$ increases. Finally, Fig.~\ref{p_error_comp-SBM} shows the compared error probability between the graph shift average and the optimal estimator. It is observed that for $N>22$ both estimators yield the same result. It is believed that, since the SBM is a combination of ER graphs, then its eigenvalues might satisfy the conditions of Thm.~\ref{thm_glln} and thus the graph shift average is optimal.


\subsection{Gaussian-Markov Random Fields} \label{subsec_GMRF}

\begin{figure*}
	\captionsetup[subfigure]{justification=centering}
	\centering
	\begin{subfigure}{0.95\columnwidth}
		\centering
		\includegraphics[width=0.95\textwidth]{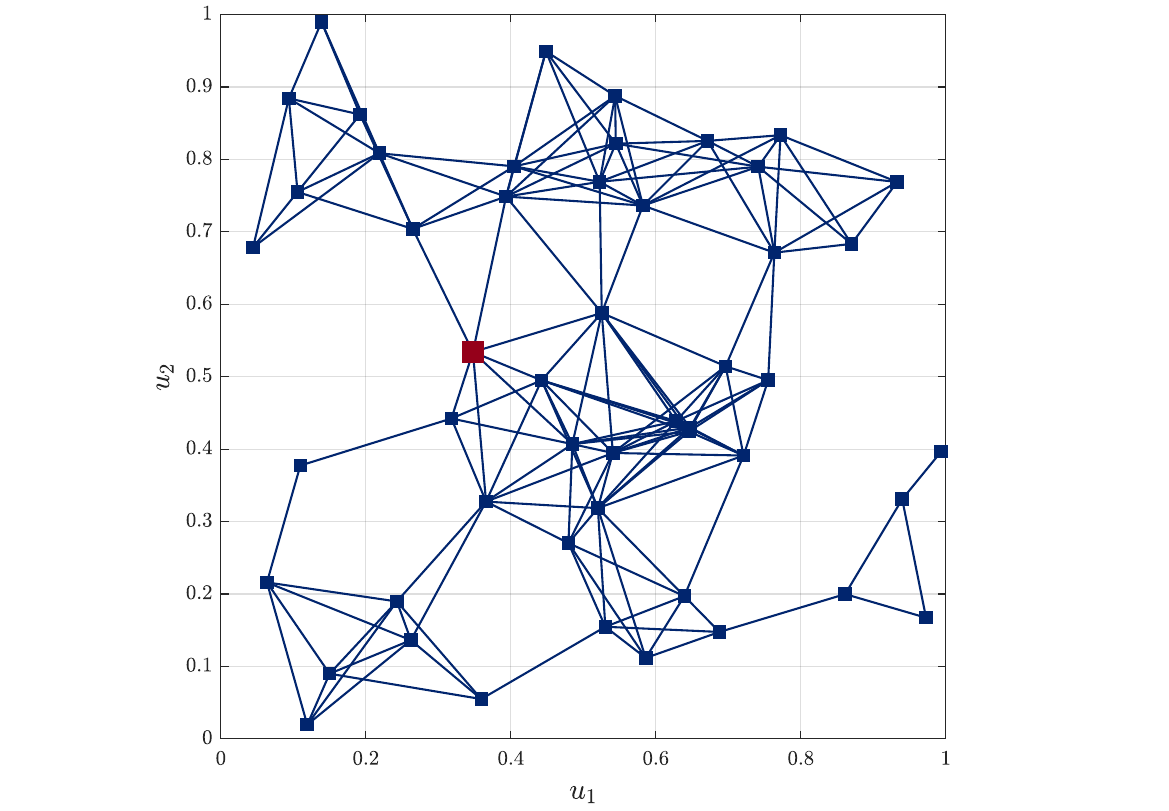}
		\caption{Sensor network}
		\label{gmrf_graph}
	\end{subfigure}
	\hfill
	\begin{subfigure}{0.95\columnwidth}
		\centering
		\includegraphics[width=0.95\textwidth]{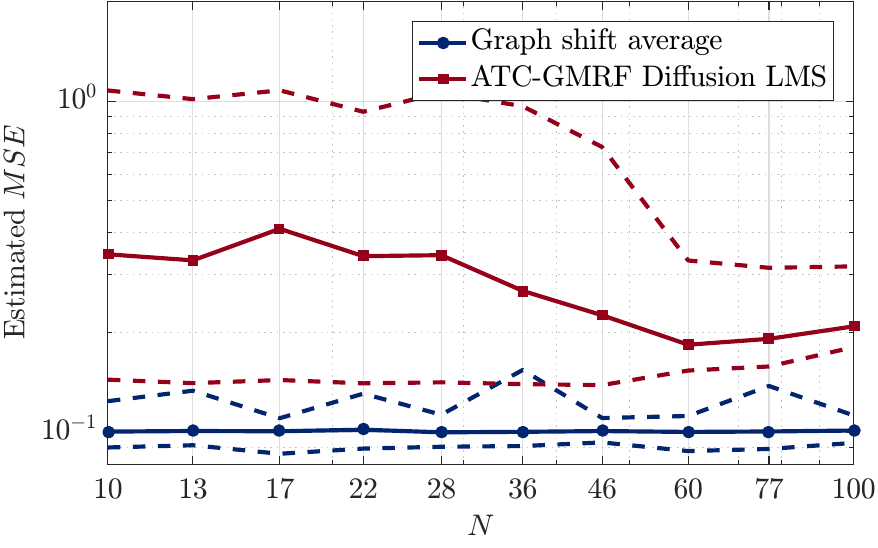}
		\caption{Varying graph size $N$}
		\label{gmrf_error_n}
	\end{subfigure}
	\caption{Gaussian-Markov random field estimation. \subref{gmrf_graph} Example of a $N=50$ sensor network. Sensors are located at random. A weight function $\rho(i,j)=\alpha e^{-\beta \|\bbu_{i}-\bbu_{j}\|_{2}^{2}}$ is computed among nodes. A weighted edge between two nodes $i$ and $j$ is drawn if $\rho(i,j)>\rho_{\textrm{thres}}$ (the weight of the edge is $\rho(i,j)$).  \subref{gmrf_error_n} Estimated MSE for the graph shift average as well as the ATC-GMRF diffusion LMS estimator for a sensor network of varying graph size $N$.}
	\label{fig_GMRF}
\end{figure*}

As a final example we consider the problem of estimating the mean of a Gaussian-Markov random field (GMRF) \cite{rue05}. This problem arises particularly in the context of sensor networks in which measurements are correlated based on the distance among these sensors. Let $\ccalG=(\ccalV,\ccalE,\ccalW)$ be the $N$-sensor network. We consider $N$ sensors deployed at random over a field. The influence between sensors $i$ and $j$ is described by a function $\rho(i,j)=\alpha e^{-\beta \|\bbu_{i}-\bbu_{j}\|_{2}^{2}}$, where $\bbu_{i}$ and $\bbu_{j}$ are $\reals^{2}$ vectors describing the positions of sensor $i$ and $j$ respectively, $i \ne j$, $i,j=1,\ldots,N$. The constants $\alpha$ and $\beta$ are chosen such that $\max_{i,j} \rho(i,j)=\rho_{\max}$ and $\min_{i,j} \rho(i,j) = \rho_{\min}$. We set $\rho(i,i)=0$. An edge between two sensors is drawn whenever the influence function exceeds a certain threshold $(i,j) \in \ccalE \Leftrightarrow \rho(i,j) \ge \rho_{\textrm{thres}}$. The weight function $\ccalW$ is given by the influence function $\ccalW = \rho$; note that we consider graphs without self-loops so that $(i,i) \notin \ccalE$. A WSS graph signal is a GMRF if it has a Gaussian distribution with covariance matrix $\bbC_{x}=|a_{0}|^{2} (\bbI - a \bbS)^{-1} [(\bbI - a \bbS)^{-1}]^{\Hr}$ \cite{marques17}.

To further illustrate this simulation, we revisit Fig.~\ref{fig_gmrf_1000} discussed in the introduction. More precisely, in this scenario, we consider $N=1000$ sensor distributed uniformly at random over the area, each of which takes one measurement. The measurement of each sensor is shown in Fig.~\ref{fig_sensor_measurements}. This realization of the GMRF process is then diffused through the graph to compute the graph shift average $\hbmu_{N}$ as in \eqref{eqn_graph_shift_unbiased}. The resulting value at each sensor is shown in Fig.~\ref{fig_sample_mean}. We can see how the output of the graph shift average is an accurate estimator of the true mean field of the GMRF process, which is shown in Fig.~\ref{fig_true_mean}.

For the other simulations in Fig.~\ref{fig_GMRF}, we set the mean of $\bbx$ to be $\bbmu = \mu \cdot \bbv_{1}$ with $\mu=3$. For building the graph we consider $\rho_{\min}=0.01$, $\rho_{\max}=1$ and $\rho_{\textrm{thres}}=1.75 \cdot \bar{\rho}$ where $\bar{\rho}$ is the average of all the elements in $\{\rho(i,j), i \ne j, i,j=1,\ldots,N\}$. We set $a=1/\lambda_{1}$ and $a_{0}$ so that $\textrm{SNR}=10 \ \textrm{dB}$. For each value of $N$ we simulate $50$ different sensor networks (see Fig.~\ref{gmrf_graph} for an example) and for each network we simulate $10^{3}$ realizations of the WSS graph signal to compute the estimated MSE. In our simulations we compare the MSE resulting from using the graph shift average \eqref{eqn_graph_shift_unbiased} with the MSE obtained from estimating the mean according to the ATC-GMRF diffusion LMS algorithm introduced in \cite{dilorenzo14}.

In Fig.~\ref{gmrf_error_n} we computed the estimated MSE as a function of the graph size $N$. We observe that the graph shift average \eqref{eqn_graph_shift_unbiased} performs better than the ATC-GMRF diffusion LMS. It is worth pointing out that the graph shift average is designed to work on stationary processes whereas the ATC-GMRF diffusion LMS algorithm also works for nonstationary GMRFs.


\section{Conclusions} \label{sec_conclusions}

In the present paper we set to expand the field of statistical graph signal processing by developing a first notion of ergodicity. More precisely, we computed the realization average as a graph shift average: a diffusion of a single realization through the graph. We proved, in a result reminiscent of the WLLN, that this graph shift average converges, under some mild conditions on the graph, to the ensemble mean.

For graphs that do not satisfy the conditions for the WLLN we proposed a LSI graph filter that, when applied to a single realization of the WSS graph signal, yields an unbiased estimator that converges to the ensemble mean on any graph; this is achieved by carefully designing the filter taps so as to account for the specific graph spectrum. Furthermore, the LSI graph filter is optimal in the sense that it minimizes both the mean squared error as well as the volume of the ellipsoid determined by the error covariance matrix.

Finally, we simulated WSS graph signals on several supports to illustrate the theoretical results. We noted that both the probability of error as well as the bound decrease as the size of the graph gets larger. We also observed that for ER graphs and SBMs, the graph shift average and the optimal estimator coincide, and for covariance graphs the optimal estimator yields better results. Additionally, we studied the problem of estimating the mean of a GMRF which typically arises when considering measurements obtained from a sensor networks. We compared the graph shift average with the ATC-GMRF diffusion LMS algorithm and showed better performance of the former.



\appendices


\section{Moments of graph shift averages: Proof of Propositions~\ref{prop_unbiased}~and~\ref{prop_psd_graph_shift_average}.}

%
\begin{proof}[Proof of Proposition~\ref{prop_unbiased}]
Computing the expectation of \eqref{eqn_graph_shift} we obtain
\begin{align}\label{eqn_prop_unbiased_pf_10}
   \mbE [ \hbmu_{N} ]
	    =  \frac{1}{\alpha(\bbS)} 
	       \sum_{\ell=0}^{N-1} \bbS^{\ell} \mbE [\bbx] .
\end{align}
Observe now that since the signal $\bbx$ is stationary on the graph, it holds that its expectation is $\mbE [\bbx] = \mu \bbv_{1}$ for some scalar $\mu$. Substituting this fact into \eqref{eqn_prop_unbiased_pf_10} and reordering terms yields
\begin{align}\label{eqn_prop_unbiased_pf_20}
   \mbE [ \hbmu_{N} ]
	    = \frac{1  }{\alpha(\bbS)} 
	      \sum_{\ell=0}^{N-1} \bbS^{\ell} \mu \bbv_{1}
   	    = \frac{\mu}{\alpha(\bbS)} 
	      \sum_{\ell=0}^{N-1} \bbS^{\ell}     \bbv_{1} .
\end{align}
But since $\bbv_{1}$ is an eigenvector of $\bbS$ associated with eigenvalue $\lam_1$ we have $\bbS^{\ell}  \bbv_{1} = \lambda_{1}^{\ell} \bbv_{1}$ which reduces \eqref{eqn_prop_unbiased_pf_20} to
\begin{align}\label{eqn_prop_unbiased_pf_30}
   \mbE [ \hbmu_{N} ]
   	    = \frac{\mu}{\alpha(\bbS)} 
	      \sum_{\ell=0}^{N-1} \lam_1^{\ell}     \bbv_{1} .
\end{align}
Using the condition in \eqref{eqn_mean_definition} we substitute $\bbmu= \mu\bbv_{1}$ in \eqref{eqn_prop_unbiased_pf_30} and reorder terms to obtain \eqref{eqn_unbiased}. 
\end{proof}

%
\begin{proof}[Proof of Proposition~\ref{prop_psd_graph_shift_average}]
First, we prove that the graph shift average \eqref{eqn_graph_shift_unbiased} is WSS with respect to $\bbS$ [cf. Def.~\ref{def_stationarity}]. It is noted that (i) is satisfied since $\mbE[\hbmu_{N}]=\bbmu=\mu \bbv_{1}$ where $\bbv_{1}$ is an eigenvector of $\bbS$. To prove that (ii) holds, we compute the covariance matrix $\bbC_{\hhatmu}$ and show that it is diagonalizable by $\bbV$. We start by using \eqref{eqn_graph_shift_unbiased} to write
\begin{align}
& \hbmu_{N} - \bbmu 
	 = \frac{1}{\sum_{\ell=0}^{N-1} \lambda_{1}^{\ell}} 
		\sum_{\ell=0}^{N-1} \bbS^{\ell}\bbx
		- \bbmu 
		\label{eqn_hatmu_mu_def} \\
	&= \frac{1}{\sum_{\ell=0}^{N-1} \lambda_{1}^{\ell}} 
		\sum_{\ell=0}^{N-1} \bbS^{\ell}(\bbx-\bbmu) 
	 + \frac{1}{\sum_{\ell=0}^{N-1} \lambda_{1}^{\ell}} 
			\sum_{\ell=0}^{N-1} \bbS^{\ell} \bbmu
		- \bbmu
		\label{eqn_second_term}
\end{align}
Using the fact that $\bbmu = \mu \bbv_{1}$, the second term in \eqref{eqn_second_term} is equivalent to
\begin{equation}
	\frac{1}{\sum_{\ell=0}^{N-1} \lambda_{1}^{\ell}} 
		\sum_{\ell=0}^{N-1} \bbS^{\ell} \mu \bbv_{1} 
	- \bbmu
	= \frac{\mu}{\sum_{\ell=0}^{N-1} \lambda_{1}^{\ell}}
		\sum_{\ell=0}^{N-1} \bbS^{\ell} \bbv_{1}
	- \bbmu.
\end{equation}
Recalling that $\bbS^{\ell}\bbv_{1} = \lambda_{1}^{\ell}\bbv_{1}$ since $\bbv_{1}$ is the eigenvector of $\bbS$ associated to $\lambda_{1}$ and reordering terms we further get
\begin{equation}
	\frac{\mu}{\sum_{\ell=0}^{N-1} \lambda_{1}^{\ell}}
		\sum_{\ell=0}^{N-1} \lambda_{1}^{\ell} \bbv_{1}
	- \bbmu
	=\frac{\sum_{\ell=0}^{N-1} \lambda_{1}^{\ell}}{\sum_{\ell=0}^{N-1} \lambda_{1}^{\ell}} \mu \bbv_{1}
	- \bbmu
	= \bbzero
\end{equation}
where property (i) of WSS graph signals $\bbmu = \mu \bbv_{1}$ was used once more. Canceling out the second term of \eqref{eqn_second_term}, then \eqref{eqn_hatmu_mu_def} yields
\begin{equation} \label{eqn_hatmu_mu}
\hbmu_{N} - \bbmu 
	= \frac{1}{\sum_{\ell=0}^{N-1} \lambda_{1}^{\ell}} 
		\sum_{\ell=0}^{N-1} \bbS^{\ell}(\bbx-\bbmu).
\end{equation}
Eq. \eqref{eqn_hatmu_mu} can be immediately used to compute the covariance matrix $\bbC_{\hhatmu} = \mbE[(\hbmu_{N}-\bbmu)(\hbmu_{N}-\bbmu)^{\Hr}]$ as follows
\begin{equation}
\bbC_{\hhatmu} = 
	\left( \frac{1}{\sum_{\ell=0}^{N-1} \lambda_{1}^{\ell}} 
		\sum_{\ell=0}^{N-1} \bbS^{\ell} \right)
		\bbC_{x}
	\left( \frac{1}{\sum_{\ell=0}^{N-1} \lambda_{1}^{\ell}} 
		\sum_{\ell=0}^{N-1} (\bbS^{\ell})^{\Hr} \right)
\end{equation}
where linearity of the expectation and the fact that $\bbC_{x} = \mbE[(\bbx-\bbmu)(\bbx-\bbmu)^{\Hr}]$ was used. But $\bbx$ is WSS on $\bbS$ and thus $\bbC_{x} = \bbV \diag(\bbp) \bbV^{\Hr}$. This fact, together with $\bbS = \bbV \bbLambda \bbV^{\Hr}$ and reordering terms yields
\begin{equation} \label{eqn_pre_Cmu}
\bbC_{\hhatmu} 
	= \frac{1}{|\sum_{\ell=0}^{N-1} \lambda_{1}^{\ell}|^{2}} 
	\bbV 
		\left( \sum _{\ell=0}^{N-1} \bbLambda^{\ell} \right) 
		\diag(\bbp) 
		\left( \sum _{\ell=0}^{N-1} (\bbLambda^{\ell})^{\Hr} \right)
	\bbV^{\Hr}.
\end{equation}
Since $\bbLambda$ is a diagonal matrix, then \eqref{eqn_pre_Cmu} can be written as
\begin{equation} \label{eqn_Cmu}
C_{\hhatmu} = \bbV \diag(\bbq) \bbV^{\Hr}
\end{equation}
for a PSD vector $\bbq \in \reals^{N}$, proving that (ii) in Def.~\ref{def_stationarity} holds and thus $\hbmu_{N}$ is WSS on $\bbS$. Furthermore, each element $q_{n}$ of the PSD $\bbq$ is given by \eqref{eqn_qk} completing the proof.
\end{proof}


\section{Proving the weak law of large numbers: Proof of Lemmas~\ref{l_unbiased_error_node}~and~\ref{l_behavior_qk}.}

%
\begin{proof}[Proof of Lemma~\ref{l_unbiased_error_node}]
Let $\bbe_{k}$ be a vector containing all zeros except for a $1$ in position $k \in \{1,\ldots,N\}$. Then, we can write $[\hbmu_{N}-\bbmu]_{k} = \bbe_{k}^{\Tr} (\hbmu_{N}-\bbmu)$. Since $\mbE[\bbe_{k}^{\Tr} \hbmu_{N}]=[\bbmu]_{k}$ we have that $\textrm{var}(\bbe_{k}^{\Tr}(\hbmu_{N}-\bbmu)) = \bbe_{k}^{\Tr} \bbC_{\hat{\mu}} \bbe_{k} = \bbe_{k}^{\Tr} \bbV \diag(\bbq) \bbV^{\Hr} \bbe_{k} < \infty$. Noting that $\bbe_{k}^{\Tr} \bbV = [v_{k,1},\ldots,v_{k,N}]$ is the $k$th row of $\bbV$, the variance of $\bbe_{k}^{\Tr}(\hbmu_{N}-\bbmu)$ turns out to be
\begin{equation} \label{eqn_var_hbmu}
\textrm{var}\left( \bbe_{k}^{\Tr}(\hbmu_{N}-\bbmu) \right) = \sum_{n=1}^{N} q_{n} |v_{k,n}|^{2}.
\end{equation}
Finally, Chebyshev's inequality \cite[Thm.~1.6.4]{durrett10} is applied to obtain \eqref{eqn_unbiased_error_node}.
\end{proof}

%
\begin{proof}[Proof of Lemma~\ref{l_behavior_qk}]
Let $\lambda_{n}=R e^{\jj \theta}$ for $n=2,\ldots,N$, and where $\jj$ denotes the imaginary unit $\jj^{2}=-1$. Assume first that $\lambda_{1}>1$ and without loss of generality that $R \neq 1$. Then, using the geometric sum on \eqref{eqn_qk} we have
\begin{equation}
q_{n} 
	= p_{n} 
		\frac{|1-\lambda_{1}|^{2}}{|1-\lambda_{n}|^{2}} 
		\frac{|1-\lambda_{n}^{N}|^{2}}{|1-\lambda_{1}^{N}|^{2}}
	\le
	p_{n} 
		\frac{(1-\lambda_{1})^{2}}{(1-R)^{2}}
		\frac{(1+R^{N})^{2}}{(1-\lambda_{1}^{N})^{2}}
\end{equation}
Now, because $\lambda_{1}>1$ then $|1-\lambda_{1}|^{2}=\ccalO(\lambda_{1}^{2})$ and $|1-\lambda_{1}^{N}|^{2}=\ccalO(\lambda_{1}^{2N})$. Likewise, if $R>1$ then $(1-R)^{2}=\ccalO(R^{2})$ and $(1+R^{N})^{2}=\ccalO(R^{2N})$ so that
\begin{equation}
\frac{(1-\lambda_{1})^{2}}{(1-R)^{2}} \frac{(1+R^{2})^{N}}{(1-\lambda_{1}^{N})^{2}} 
	= \ccalO \left( \frac{\lambda_{1}^{2}}{R^{2}} \frac{R^{2N}}{\lambda_{1}^{2N}} \right) 
	= \ccalO \left( \frac{R^{2(N-1)}}{\lambda_{1}^{2(N-1)}} \right).
\end{equation}
And, by hypothesis, $|\lambda_{n}|/\lambda_{1} = R/\lambda_{1} = o(N^{-\delta/2(N-1)})$ so that
\begin{equation}
\frac{(1-\lambda_{1})^{2}}{(1-R)^{2}} \frac{(1+R^{2})^{N}}{(1-\lambda_{1}^{N})^{2}} 
	= \ccalO \left( \frac{|\lambda_{n}|^{2(N-1)}}{\lambda_{1}^{2(N-1)}} \right) = o(N^{-\delta}).
\end{equation}
If $R < 1$, and since $\lambda_{1}>1$, then
\begin{equation}
\frac{(1-\lambda_{1})^{2}}{(1-R)^{2}}
		\frac{(1+R^{N})^{2}}{(1-\lambda_{1}^{N})^{2}} 
	= \ccalO \left( \frac{1}{\lambda_{1}^{2(N-1)}} \right) 
	= o (N^{-\delta}).
\end{equation}

For $\lambda_{1}=1$, we have that $|\sum_{\ell=0}^{N-1} \lambda_{1}^{\ell}|^{2} = N^{2}$ so that
\begin{equation}
q_{n}
	= \frac{p_{n}}{N^{2}} \left| \sum_{\ell=0}^{N-1} \lambda_{n}^{\ell} \right|^{2}
	= \frac{p_{n}}{N^{2}} \frac{|1-R^{N} e^{\jj N\theta}|^{2}}{|1-\lambda_{n}|^{2}}.
\end{equation}
For $R<\lambda_{1}=1$ we have that $|1-R^{N} e^{\jj \theta N}|^{2}=\ccalO(1)$. If $R=o(1)$, then $|1-Re^{\jj \theta}|^{2}=\ccalO(1)$ and the $1/N^{2}$ guarantees that $q_{n}=o(1/N)$. If $R=1$ then $|1-e^{\jj N\theta}|^{2}/|1-e^{\jj \theta}|^{2}$ oscillates in a bounded fashion so that, again, the factor $1/N^{2}$ guarantees $q_{n}=o(1/N)$ completing the proof.
\end{proof}


\section{Special cases: Proof of Corollaries~\ref{coro_wlln}~and~\ref{coro_er}.}

%
\begin{proof}[Proof of Corollary~\ref{coro_wlln}]
Note that $\lambda_{1}=1$ so that Thm.~\ref{thm_glln} holds. More specifically, $q_{n}=0$ for all $n \ne 1$ and $v_{k,1}=1/\sqrt{N}$ for all $k \in \{1,\ldots,N\}$ so that $\bbmu=\mu \bbone$ is the constant vector. Finally, $\sum_{\ell=0}^{N-1}\bbS^{\ell} \bbx = \sum_{n=1}^{N}x_{n} \bbone_{N}$ because after $N$ shifts the values of the signal have been aggregated at all nodes due to the nature of the directed cycle, see Fig.~\ref{fig_dc}.
\end{proof}

%
\begin{proof}[Proof of Corollary~\ref{coro_er}]
First, note that $\lambda_{1}=Np_{\ER}+o(N)$ and that by the semi-circle law, with probability $1-o(1)$ all eigenvalues except the largest one lie in the interval $(-c\sqrt{N},c\sqrt{N})$ for any $c>2p_{\ER}(1-p_{\ER})$ \cite{furedi81, krivelevich03}. Then, we have that $\lambda_{2} \le c\sqrt{N}$ so that $\lambda_{2}/\lambda_{1}=o(N^{-\delta/2(N-1)})$ for any $0<\delta<N-1$, satisfying Thm.~\ref{thm_glln}. Additionally, because $\sqrt{N}v_{k,1} = 1+o(N^{-(1/2-r)})$, $0<r<1/2$ with probability $1-o(1)$, see \cite{mitra09}, then any node $k \in \{1,\ldots,N\}$ yields similar probability of error.
\end{proof}


\section{Optimal mean estimation: Proof of Lemma~\ref{l_nonconvergent},~Propositions~\ref{prop_optimal_unbiased}~and~\ref{prop_psd_optimal}~and~Theorems~\ref{thm_optimal_MSE}~and~\ref{thm_optimal_logdet}.}

%
\begin{proof}[Proof of Lemma~\ref{l_nonconvergent}]
In analogy with the proof of Lemma~\ref{l_behavior_qk} we prove that for the conditions of $\lambda_{1}<1$ for which $\ccalM=\{2,\ldots,N\}$ or for the case when $\lambda_{1}>1$ and $\ccalM$ is nonempty, then
\begin{equation}
q_{n}=p_{n} \left| \frac{1-\lambda_{1}}{1-\lambda_{n}} \right|^{2} (1+o(1))
	\label{eqn_nonconvergent_qk}
\end{equation}
First, let $m \in \ccalM$ and $\lambda_{m}=Re^{\jj\theta}$ with $\lambda_{1}>1$ and $\jj^{2}=-1$. Then, since $R/\lambda_{1}$ does not decrease any faster than $N^{-\delta/2(N-1)}$, we have
\begin{equation}
\frac{|1-\lambda_{m}^{N}|^{2}}{|1-\lambda_{1}^{N}|^{2}}
	= \frac{1-2R^{N} \cos(N\theta) + R^{2N}}{1-2\lambda_{1}^{N} + \lambda_{1}^{2N}} = 1+o(1).
\end{equation}
For $\lambda_{1}<1$ we have that, since $R/\lambda_{1} \le 1$ and $R^{N}=o(1)$ and $\lambda_{1}^{N}=o(1)$, then $|1-R^{N} e^{\jj\theta N}|^{2}=1+o(1)$ and $|1-\lambda_{1}^{N}|^{2}=1+o(1)$, completing the proof.
\end{proof}

%
\begin{proof}[Proof of Proposition~\ref{prop_optimal_unbiased}]
Let us start by computing the expectation of \eqref{eqn_optimal_unbiased}
\begin{equation}
\mbE [ \bbz_{N} ]
	= \frac{1}{\sum_{\ell=0}^{N-1} h_{\ell} \lambda_{1}^{\ell}}
		\sum_{\ell=0}^{N-1} h_{\ell} \bbS^{\ell} \mbE [\bbx].
\end{equation}
Now, by definition of WSS graph signals (Def.~\ref{def_stationarity}), it holds that $\mbE[\bbx]=\mu \bbv_{1}$ where $\bbv_{1}$ is the eigenvector associated to eigenvalue $\lambda_{1}$ so that $\bbS^{\ell}\bbv_{1} = \lambda_{1}^{\ell} \bbv_{1}$. Then,
\begin{equation}
\mbE[\bbz_{N}]
	= \frac{1}{\sum_{\ell=0}^{N-1} h_{\ell} \lambda_{1}^{\ell}} 
		\sum_{\ell=0}^{N-1} \mu \ h_{\ell}\bbS^{\ell} \bbv_{1}
	= \mu \cdot 
		\frac{\sum_{\ell=0}^{N-1}h_{\ell} \lambda_{1}^{\ell} \bbv_{1}}{\sum_{\ell=0}^{N-1}h_{\ell} \lambda_{1}^{\ell}}.
\end{equation}
Finally, noting that $\bbv_{1}$ does not depend on the index of the summation, and taking it out, we observe that numerator and denominator are the same, and thus,
\begin{equation}
\mbE[\bbz_{N}]
	= \mu \bbv_{1}\cdot 
		\frac{\sum_{\ell=0}^{N-1}h_{\ell} \lambda_{1}^{\ell}}{\sum_{\ell=0}^{N-1}h_{\ell} \lambda_{1}^{\ell}}
	= \mu \bbv_{1} = \bbmu.
\end{equation}
which completes the proof.
\end{proof}

%
\begin{proof}[Proof of Proposition~\ref{prop_psd_optimal}]
The unbiased LSI graph filter \eqref{eqn_optimal_unbiased} has GFT coefficients given by $\tdh_{n}/\tdh_{1}$ [cf.~\eqref{eqn_GFT_filter}]. Then, from \cite[Property 1]{marques17} it is obtained that each element of the PSD of the output $r_{n}$ of a LSI graph filter is equal to the squared magnitude of each frequency coefficient of the filter $|\tdh_{n}|^{2}/|\tdh_{1}|^{2}$, multiplied by the corresponding PSD coefficient of the input $p_{n}$. This yields \eqref{eqn_psd_optimal}. The expression for the covariance matrix $\bbC_{z}$ readily follows, cf.~Sec.~\ref{sec_prelim}.
\end{proof}

%
\begin{proof}[Proof of Theorem~\ref{thm_optimal_MSE}]
Start by taking the derivative of \eqref{eqn_MSE} with respect to each frequency coefficient and set it to zero
\begin{align}
\frac{\partial \ \tr[\bbC_{z}]}{\partial h_{1}} 
	&= -2 \frac{\tdh_{1}}{|\tdh_{1}|^{2}}\sum_{n=2}^{N} p_{n} \frac{|\tdh_{n}|^{2}}{|\tdh_{1}|^{2}} = 0;
		\\
\frac{\partial \ \tr[\bbC_{z}]}{\partial h_{n}} 
	&= 2 p_{n} \frac{\tdh_{n}}{|\tdh_{1}|^{2}} = 0 \ , \ n=2,\ldots,N.
\end{align}
Note that by setting $\tdh_{1} \ne 0$ and $\tdh_{n}=0$ for all $n=2,\ldots,N$ both necessary and sufficient conditions are satisfied, and thus, these are the optimal frequency coefficients of the filter.
\end{proof}

%
\begin{proof}[Proof of Theorem~\ref{thm_optimal_logdet}]
Recall that the determinant is the product of the eigenvalues so that
\begin{equation}
\log(\det(\bbC_{z}))
	= \log \left( \prod_{n=1}^{N} r_{n} \right)
	= \sum_{n=1}^{N} \log \left( p_{n} \frac{|\tdh_{n}|^{2}}{|\tdh_{1}|^{2}} \right).
\end{equation}
This, in turn, can be rewritten as
\begin{equation}
\log(\det(\bbC_{z}))
	=\sum_{n=1}^{N} \left[ \log( p_{n}) + 2 \log (|\tdh_{n}|) - 2 \log(|\tdh_{1}|) \right].
\end{equation}
Note that $\log(|\tdh_{n}|)$ is minimized when $\tdh_{n}=0$. Also, $\log(|\tdh_{1}|)$ is a monotone increasing function so that $-\log(|\tdh_{1}|)$ is minimized for the largest possible value of $\tdh_{1}$. Then, by setting $\tdh_{n}=0$ for all $n=2,\ldots,N$ and $\tdh_{1}=\sqrt{\nu_{\max}}$ so that the constraint is satisfied for the largest possible value of $\tdh_{1}$ we effectively minimize the objective function, thus completing the proof.
\end{proof}

\bibliographystyle{IEEEtran}
\bibliography{myIEEEabrv,biblioGLLN}

\end{document}